\journal{Journal of Games and Economic Behaviour}
\DeclareMathOperator*{\argmin}{arg\,min}
\DeclareMathOperator*{\argmax}{arg\,max}
 \newcounter{eg}[section]
\renewcommand{\theeg}{\arabic{section}.\arabic{eg}}
\newenvironment{examp}[1][]{\refstepcounter{eg}
\par\medskip \noindent
\textit{Example~\theeg. #1} \rmfamily}{\hfill $\square$   \hspace{-4.5pt} \vspace{6pt}}
\newmdenv[leftmargin=\dimexpr-0.4em, innerleftmargin=0.5em,
rightmargin=\dimexpr-0.4em, innerrightmargin=0.5em,
linewidth=2pt,linecolor=red, topline=false, bottomline=false,
innertopmargin=0pt,innerbottommargin=0pt,skipbelow=0pt,skipabove=0pt,%
]{notex}
\newenvironment{note}%
{\vskip\dimexpr\dp\strutbox-\prevdepth\relax\notex\strut\ignorespaces}%
{\xdef\notetpd{\the\prevdepth}\endnotex\vskip-\notetpd\relax}
\let\oldtodo\todo
\DeclareDocumentCommand{\todo}{ O{} +g +d<> }{%
	\IfNoValueTF{#2}{\relax}{%
		\oldtodo[caption={#2},size=\footnotesize,#1]{\renewcommand{\baselinestretch}{1}\selectfont\sffamily#2\par}%
	}%
	\IfNoValueTF{#3}{\relax}{%
		\IfNoValueTF{#2}{
			\begin{note}%
				\begin{internallinenumbers}%
					\indent%
					#3%
				\end{internallinenumbers}%
			\end{note}%
		}{
			\vspace{-0\baselineskip}%
			\begin{note}%
				\begin{internallinenumbers}%
					\indent%
					#3%
				\end{internallinenumbers}%
			\end{note}%
		}%
	}%
}%
\newcommand{\removetodo}[2]{\todo{\textbf{delete:} ``#1'' #2}\hl{#1}}
\newcommand{\inserttodo}[1]{\todo[color=green!40]{\textbf{insert:} #1}}
\newcommand{\hltodoy}[2]{\todo[color=yellow!40]{#2}\hl{#1} }
\newcommand{\hltodo}[2]{\todo{#2}\hl{#1} }
\newcommand{\replacetodo}[2]{\todo[color=pink!40]{\textbf{replace with:}``#2'' }\hl{#1} }
\newcommand{\todol}[1][]{{%
		\let\marginpar\marginnote
		\reversemarginpar
		\renewcommand{\baselinestretch}{0.8}%
		\todo{#1}}}
\newcommand{\inserttodol}[1]{{%
		\let\marginpar\marginnote
		\reversemarginpar
		\renewcommand{\baselinestretch}{0.8}%
		\inserttodo{#1}}}
\newcommand{\removetodol}[2]{{%
		\let\marginpar\marginnote
		\reversemarginpar
		\renewcommand{\baselinestretch}{0.8}%
		\removetodo{#1}{#2}}}
\newcommand{\hltodol}[2]{{%
		\let\marginpar\marginnote
		\reversemarginpar
		\renewcommand{\baselinestretch}{0.8}%
		\hltodo{#1}{#2}}}
\newcommand{\replacetodol}[2]{{%
		\let\marginpar\marginnote
		\reversemarginpar
		\renewcommand{\baselinestretch}{0.8}%
		\replacetodo{#1}{#2}}}
\newcommand{\hltodoyl}[2]{{%
		\let\marginpar\marginnote
		\reversemarginpar
		\renewcommand{\baselinestretch}{0.8}%
		\hltodoy{#1}{#2}}}
\newtheorem{theorem}{Theorem}[section]
\newtheorem{lemma}[theorem]{Lemma}
\newtheorem{proposition}[theorem]{Proposition}
\newtheorem{corollary}[theorem]{Corollary}
\newtheorem{definition}{Definition}[section]
\def\bkE{{\rm I\kern-.17em E}}
\def\bk1{{\rm 1\kern-.17em l}}
\def\bkD{{\rm I\kern-.17em D}}
\def\bkR{{\rm I\kern-.17em R}}
\def\bkP{{\rm I\kern-.17em P}}
\def\bkZ{{\bf{Z}}}
\def\bkE{{\rm I\kern-.17em E}}
\def\bk1{{\rm 1\kern-.17em l}}
\def\bkD{{\rm I\kern-.17em D}}
\def\bkR{{\rm I\kern-.17em R}}
\def\bkP{{\rm I\kern-.17em P}}
\newcommand{\pushright}[1]{\ifmeasuring@#1\else\omit\hfill$\displaystyle#1$\fi\ignorespaces}
\newcommand{\pushleft}[1]{\ifmeasuring@#1\else\omit$\displaystyle#1$\hfill\fi\ignorespaces}
\def\bkZ{{\bf{Z}}}
\def\b12{(\beta_1,\beta_2)}
\newenvironment{example}{{\noindent \bf Example}}{\hfill $\square$\hspace{-4.5pt}\vspace{6pt}}
\newcounter{example}
\renewcommand{\theexample}{\thesection.\arabic{example}}
\newcounter{remark}
\renewcommand{\theremark}{\thesection.\arabic{remark}}
\newenvironment{remark}{{\noindent \it Remark: }}{\hfill $\square$}
\def\Xscr{\mathcal{X}}
\def\Yscr{\mathcal{Y}}
\def\Ebb{\mathbb{E}}
\newlength{\noteWidth}
\long\def\notes#1{\ifinner
{\tiny #1}
\else
\marginpar{\parbox[t]{\noteWidth}{\raggedright\tiny #1}}
\fi\typeout{#1}}
 \def\notes#1{\typeout{read notes: #1}} 
\newcommand{\wi}[1]{\widehat{#1}}
\newcommand{\expec}[1]{\Ebb\Big[#1\Big]}
\newcommand{\curly}[1]{\left\{#1\right\}}
\newcommand{\round}[1]{\left(#1\right)}
\newcommand{\TypeK}{T_{\Yscr}^{K}}
\newcommand{\capacity}{\Xi(\mathscr{U})}
\newcommand{\Gs}{G_{\mathsf{s}}}
\newcommand{\Gsym}{G_{\mathsf{s}}^{\mathsf{Sym}}}
\newcommand{\Gc}{G_{\mathsf{c}}}
\newcommand{\ut}{\mathscr{U}}
\newcommand{\utsym}{\mathscr{U}^{\mathsf{Sym}}}
\newcommand{\best}{\mathscr{B}}
\newcommand{\pbest}{\mathscr{S}}
\newcommand{\ie}{i.e.\@\xspace} 
\newcommand{\eg}{e.g.\@\xspace} 
\newcommand{\Real}{\ensuremath{\mathbb{R}}}
\newcommand{\maximize}[1]{\displaystyle\maxim_{#1}}
\newcommand{\maxim}{\mathop{\hbox{\rm max}}}
\def\OPT{{\rm OPT}}
\def\rarr{\rightarrow}
\def\Ebb{\mathbb{E}}
\def\Pbb{{\mathbb{P}}}
\def\Nbb{{\mathbb{N}}}
\def\Ybb{{\mathbb{Y}}}
\def\Zbb{{\mathbb{Z}}}
\def\exp{\mathop{\hbox{\rm exp}}}
\def\spose#1{\hbox to 0pt{#1\hss}}
\def\sub#1{^{\null}_{#1}}
\def\text #1{\hbox{\quad#1\quad}}
\def\xhat{{\hat x}}
\def\nthinsp{\mskip -2   mu}
\def\superstar{^{\raise 0.5pt\hbox{$\nthinsp *$}}}
\def\SUPERSTAR{^{\raise 0.5pt\hbox{$*$}}}
\def\lamstarT {\lambda^{\raise 0.5pt\hbox{$\nthinsp *$}T}}
\def\Ascr{{\cal A}}
\def\Dscr{{\cal D}}
\def\Oscr{{\cal O}}
\def\Pscr{{\cal P}}
\def\Qscr{{\cal Q}}
\def\Rscr{{\cal R}}
\def\Cscr{{\cal C}}
\def\Zscr{{\cal Z}}
\def\Xscr{{\cal X}}
\def\Yscr{{\cal Y}}
\def\Lbar{\skew{4.3}\bar L}
\def\Ltilde{\widetilde L}
\def\stilde{\widetilde s}
\def\th{^{\rm th}}
\def\Ltilde{\widetilde {L}}
\def\xhat{\skew{2.8}\widehat x}
\def\ybar{\skew3\bar y}
\def\ytilde{\skew3\widetilde y}
\def\zbar{\skew{2.8}\bar z}
\def\ztilde{\skew{2.8}\widetilde z}
\def\supp{{\rm supp}}
\def\non{\nonumber}
\let\forallnew\forall
\renewcommand{\forall}{\forallnew\ }
\let\forall\forallnew
		\def\bkE{{\rm I\kern-.17em E}}
		\def\bk1{{\rm 1\kern-.17em l}}
		\def\bkD{{\rm I\kern-.17em D}}
		\def\bkR{{\rm I\kern-.17em R}}
		\def\bkP{{\rm I\kern-.17em P}}
		\def\bkY{{\bf \kern-.17em Y}}
		\def\bkZ{{\bf \kern-.17em Z}}
		\def\bkC{{\bf  \kern-.17em C}}
		\def\bsp{\begin{split}}
		\def\beq{\begin{eqnarray}}
		\def\bal{\begin{align*}}
		\def\bc{\begin{center}}
		\def\be{\begin{enumerate}}
		\def\bi{\begin{itemize}}
		\def\bs{\begin{small}}
		\def\bS{\begin{slide}}
		\def\ec{\end{center}}
		\def\ee{\end{enumerate}}
		\def\ei{\end{itemize}}
		\def\es{\end{small}}
		\def\eS{\end{slide}}
		\def\eeq{\end{eqnarray}}
		\def\eal{\end{align*}}
		\def\esp{\end{split}}
		\def\qed{ \vrule height7.5pt width7.5pt depth0pt}  
	\def\cp2problem#1#2#3#4{\fbox
		 {\begin{tabular*}{0.9\textwidth}
			{@{}l@{\extracolsep{\fill}}l@{\extracolsep{6pt}}l@{\extracolsep{\fill}}c@{}}
				#1 & & $#4 $ 
			\end{tabular*}}}
		\def\bkE{{\rm I\kern-.17em E}}
		\def\bk1{{\rm 1\kern-.17em l}}
		\def\bkD{{\rm I\kern-.17em D}}
		\def\bkR{{\rm I\kern-.17em R}}
		\def\bkP{{\rm I\kern-.17em P}}
		\def\bkZ{{\bf{Z}}}
\newcommand {\beeq}[1]{\begin{equation}\label{#1}}
\newcommand {\eeeq}{\end{equation}}
\newcommand {\bea}{\begin{eqnarray}}
\newcommand {\eea}{\end{eqnarray}}
\def\texitem#1{\par\smallskip\noindent\hangindent 25pt
               \hbox to 25pt {\hss #1 ~}\ignorespaces}
\def\bsp{\begin{split}}
		\def\beq{\begin{eqnarray}}
		\def\bal{\begin{align*}}
		\def\bc{\begin{center}}
		\def\be{\begin{enumerate}}
		\def\bi{\begin{itemize}}
		\def\bs{\begin{small}}
		\def\bS{\begin{slide}}
		\def\ec{\end{center}}
		\def\ee{\end{enumerate}}
		\def\ei{\end{itemize}}
		\def\es{\end{small}}
		\def\eS{\end{slide}}
		\def\eeq{\end{eqnarray}}
		\def\eal{\end{align*}}
		\def\esp{\end{split}}
		\def\qed{ \vrule height7.5pt width7.5pt depth0pt}  
\newenvironment{proof}[1][]{{\noindent \emph {Proof} #1: }}{\hfill \qed \vspace{3pt}\\ }
\def\Cscr{{\cal C}}
\def\sub{\hbox{\rm s.t}}
	\def\maxproblemsmallsingcol#1#2#3#4{\fbox
		 {\begin{tabular*}{0.95\columnwidth}
			{@{}l@{\extracolsep{\fill}}l@{\extracolsep{-4pt}}l@{\extracolsep{\fill}}c@{}}
				#1 &  & $\maximize {#2}$ $#3$ & $ $ \\[4pt]
					 $\sub \ $  &   & $#4$ & $ $
			\end{tabular*}}
                    }
\begin{document}

\begin{frontmatter}

\title{\textbf{Shannon meets Myerson: Information Extraction from a Strategic Sender}\tnoteref{titlenote}}
\tnotetext[titlenote]{The results in this paper were presented in part at IEEE International Conference on Acoustics, Speech and Signal Processing (ICASSP)~\cite{vora2021optimal} held virtually in June 2021, at the IEEE International Conference on Signal Processing and Communications (SPCOM)~\cite{vora2020zero} held virtually in July 2020 and  at the IEEE Conference on Decision and Control held virtually in December 2020~\cite{vora2020information}.}

\author{Anuj S. Vora, Ankur A. Kulkarni}
\address{Indian Institute of Technology Bombay, Mumbai-400076, India}
\address{anujvora@sc.iitb.ac.in, kulkarni.ankur@iitb.ac.in}




\begin{abstract}

We study a setting where a receiver must design a questionnaire to recover a sequence of symbols known to strategic sender, whose utility may not be incentive compatible. 
We allow the receiver the possibility of selecting the alternatives presented in the questionnaire, and thereby linking decisions across the components of the sequence. We show that, despite the strategic sender and the noise in the channel, the receiver can recover exponentially many sequences, but also that exponentially many sequences are unrecoverable even by the best strategy.  We define the growth rate of the number of recovered sequences as the  information extraction capacity. A generalization of the Shannon capacity, it characterizes the optimal amount of communication resources required. We derive bounds leading to an exact evaluation of the information extraction capacity in many cases. Our results form the building blocks of a novel, noncooperative regime of communication involving a strategic sender.

\end{abstract}

\begin{keyword}
Mechanism design, information theory, Stackelberg game,  questionnaires, screening
\end{keyword}

\end{frontmatter}


\section{Introduction}

Consider the following situation that arose during the early days of the Covid-19 pandemic. Travellers arrived at airports with varied travel histories and   
health inspectors had to screen these travellers based on responses to standardized questionnaires. Travellers arriving from unsafe locations were hesitant to reveal their true travel histories due to  inconvenience of quarantine protocols and stigma associated with the disease while those that arrived from safe locations wanted their true travel histories to be recorded. Some travellers had complex journeys where for some days they were at safe locations, while for other days they were at unsafe ones, and were perhaps inclined to selective misreporting. The health inspector's challenge was to design a questionnaire that recovered as many true travel histories as possible. 

The above situation can be posed as follows. A \textit{receiver}  (health inspector) wishes to   recover information privately known to the \textit{sender} (traveller) over a possibly imperfect communication medium. 
The private information of the sender, \ie, its type, is a sequence of $ n $ symbols (locations) each drawn from a finite set $ \Xscr $. A questionnaire is characterized by a  set of \textit{alternatives}, each drawn from $ \Xscr^n := \prod_{1}^{n} \Xscr$, and the sender is required to select one alternative as a function of its type, as its reported sequence. On viewing the sequence reported by the sender, the receiver applies a `decoding' function or an interpretation, mapping it to a decoded  sequence. 
The sender is a \textit{non-cooperative} agent and wishes to maximize a \textit{utility function} $ \ut_n $ that depends on the true sequence and the sequence decoded by the receiver. 
 This utility function dictates the sender's response to the questionnaire, and maximizing the utility may not align with the interests of the receiver. 
We assume that the $ n $-block utility function $ \ut_n $ takes the following form,
 \begin{equation}\label{eq:avg-util-defn}
 	\ut_{n}(\xhat,x)= \frac{1}{n} \sum_{i=1}^{n}\ut(\xhat_i,x_i),
 \end{equation}
 where $ x = (x_1,\hdots,x_n) \in \Xscr^n $ is the sender's type and 
 $ \xhat =(\xhat_1,\hdots,\xhat_n)\in \Xscr^n$ is the sequence of symbols decoded by receiver, and $ \ut : \Xscr \times \Xscr \rightarrow \Real $ is the single-letter utility of the sender. 
Depending on $ x$, for certain symbols, the sender may prefer that the receiver decodes the symbol incorrectly, whereas for other symbols it may want the receiver to know the truth. The receiver, on the other hand,  is interested in maximizing the number of true sequences $ x $ recovered correctly\footnote{For example, recall the tale of a mischievous boy (sender) who observes a source that can be two possible states: \textit{wolf} and \textit{no-wolf}. The villagers (receiver) want to know if there is indeed a wolf. But the boy derives a utility $ \ut (x',x)$ when the true state is $ x $ and the villagers decode it to be $ x' $. In the classical tale, when there is no wolf, the boy wants the villagers to think there is a wolf, \ie, $ \ut({\rm \textit{wolf,\ no-wolf}}) > \ut({\rm \textit{no-wolf,\ no-wolf}})$. But when there is a wolf, he wants them to infer that there is really a wolf, \ie, $ \ut({\rm \textit{wolf,\ wolf}}) > \ut({\rm \textit{no-wolf,\ wolf}})$.}. We ask the following question -- given the sender's tendency to  misreport its type and the possibly imperfect medium of communication, how must the receiver strategize to recover as much true information as possible? And what is the maximum \textit{amount} of information that the receiver can extract from the sender? We call this the problem of \textit{information extraction} from a \textit{strategic sender}.




Notice that  the receiver's options are rather limited. The receiver must publish a questionnaire before the sender's type is realized, and hence the questionnaire must be the same for all sender types. There is neither any scope for incentivizing truthful reporting by using transfers, nor do we assume incentive compatibility for the sender. How can the receiver then get any meaningful information from the sender? We allow the receiver the option to \textit{eliminate} certain alternatives from the questionnaire. If there are $ q $ possible symbols, an exhaustive (and naive) questionnaire would have all $ q^n $ sequences as possible alternatives for the sender to choose from. Instead, we allow the receiver to select only of a subset of these sequences and publish only those sequences in the subset as available alternatives\footnote{Our questionnaire requires the sender to select only one alternative.}.  This effectively constrains the signal space of communication between the sender and receiver and becomes a key tool for extracting non-trivial information from a strategic sender. Another tool with the receiver is the possibility of enforcing a linking of  decisions~\cite{jackson2007overcoming} across components of the sequence by suitably selecting the alternatives. The effect of this linking is that, not wanting to misreport  one leg of the history forces the traveller to truthfully report other legs too; this again allows the receiver to extract nontrivial information. 
Indeed, we find that although \eqref{eq:avg-util-defn} is additive, the optimal questionnaire for $ n $-length histories is \textit{not} a stacking of $ n $ questionnaires of $ 1$-length histories.

Our problem bears similarity to that of the implementation of a social choice function, as in Myerson's mechanism design setting \cite{myerson1997game}. Indeed, if modeled that way,  the social choice function for our problem would become the identity function and incentive compatibility\footnote{We use a strictly inequality in the definition in \eqref{eq:inc} because of the manner in which we break ties for the sender.} would ask that 
\begin{equation}\label{eq:inc}
	 \ut_n(x,x) >\ut_n(x',x) \qquad \forall \;x,x'\in \Xscr^n, x \neq x'.
\end{equation}
In this case, the sender's goal would coincide with the receiver's goal of recovering the truth, thereby reducing the problem to classical, non-strategic communication. Indeed in that extreme, our results do reduce to those from communication theory. Thus one may view our contribution as that of a \textit{non-cooperative theory of communication}, generalizing Shannon's communication theory to a setting more akin to Myerson's mechanism design.

\subsection{An example}
\label{sec:intro_eg}

The following example illustrates some important aspects of the setting. 
Suppose $ n = 1 $ and let $\Xscr = \{0,1,2,3\}$ be the set of locations. Let the utility of the sender $\ut : \Xscr  \times \Xscr \rarr \Real$ be given as
  \begin{align}
    \ut = \kbordermatrix{
    & 0 & 1 & 2 & 3 \\
   0 &0  & -1  & -2 & -2 \\
    1 & 0.5 &  0  & -1 & -2 \\
   2 & -1 &  0  &  0 & -1 \\
  3 & -1  & -1  &  0.5 &  0 \\
} \non
  \end{align}
Here $\ut(i,j)$, the entry in the  $ i\th $ row and $ j\th $ column, denotes the utility obtained by the sender when $i $ is the location decoded by the receiver and $j $ is the true location. From column $ 0 $, we can see  that $  \ut(0,0) < \ut(1,0)  $ and $ \ut(0,0) > \ut(3,0) =  \ut(2,0) $. Hence, the sender prefers that, when the true location is $0$, the receiver decodes it to be $1$, but not to $ 2,3 $. Similar observations for column $ 1 $ shows that the sender equally prefers $1$ and $2$ when the true location is $1$; column $ 2 $ shows that the sender prefers the location $ 3 $ when the true location is $ 2 $; the last column shows that the sender prefers $ 3 $ when the true location is $ 3 $.

 Suppose the receiver chooses a naive questionnaire $ \Cscr =\{0,1,2,3\} $ with all possible locations and announces to decode it with an identity function; \ie, the receiver's decoded location is equal to the location reported by the sender. For a true location $ x $, suppose the sender chooses an alternative $ s(x)$ from $ \Cscr $,  where $ s : \Xscr \rarr \Cscr $ is the response of the sender.  A location $ x $ is recovered by the receiver if $ x=s(x), $ and the set of locations recovered is $ \{x \in \Cscr : s(x)=x\} $. The goal of the receiver is to design $ \Cscr $ to maximize the worst case (over all best responses of the sender) size of this set.  It is easy to see that the sender has two best responses $s_1,s_2$, where
 \begin{align}
 s_1(i) = \left\{
 \begin{array}{c l}
 1& x = 0 \\
 1& x = 1 \\
 3& x = 2 \\
 3& x = 3
 \end{array}
 \right., \quad
 s_2(i) = \left\{
 \begin{array}{c l}
 1& x = 0 \\
 2& x = 1 \\
 3& x = 2 \\
 3& x = 3
 \end{array}
 \right.. \non
 \end{align}
%
We have  $ \{ x \in \Cscr : s_1(x) = x \}  = \{1,3\}$ and $ \{ x \in \Cscr : s_2(x) = x \}  = \{3\}$. Thus, in the worst case, only one location is recovered by the receiver.

However, the receiver can recover more than one location by cleverly choosing its questionnaire. The main difficulty encountered above is that when all locations are included as alternatives, it creates room for lying, whereby less truth is recovered. 
To counter this, suppose the receiver chooses a questionnaire $ \widetilde{\Cscr}  = \{0, 2 \}$, and an identity decoding function on $ \widetilde{\Cscr} $. 
The (unique) best response of the sender is now a strategy $\stilde$, where
    \begin{align}
    \stilde(i) = \left\{
    \begin{array}{c l}
      0& x = 0 \\
      2& x = 1 \\
      2& x = 2 \\
      2& x = 3
    \end{array}
         \right. \non. 
  \end{align}
The set of recovered locations is  $  \{ x \in \widetilde{\Cscr} : \stilde(x) = x \}  = \{ 0, 2\}$, thereby the receiver has recovered more truth. 
In $ \widetilde{\Cscr} $, the sender has only $0,2 $ as alternatives and $ 1,3 $ are left out. Given these choices, the sender is forced to report $0$ as $0$, unlike earlier where it was reporting $ 0 $ as $ 1 $ when $ 1 $ was available as an alternative. Similarly, it is forced to report $ 2 $ as $ 2 $.          We thus see that a more cunningly chosen questionnaire $ \widetilde{\Cscr} $  improves on the naive questionnaire $ \Cscr $ by forcing the sender to be truthful for the locations $ \{0,2\} $. 
  
  Such a questionnaire clearly leaves blind spots for the receiver -- there is no hope of recovering locations $ 1,3 $. But including $ 1$ in $ \widetilde{\Cscr} $ would jeopardize the recovery of $ 0 $, since the sender will report $ 0 $ as $ 1 $, and including $ 3 $ will preclude the recovery of $ 2 $, since the sender will report $ 2 $  as $ 3. $ It turns out that the receiver cannot recover more than two symbols in the worst case over the best responses of the sender.
    	One may ask if the receiver can recover any more locations, in the worst case, by choosing a different decoding function. We show later that this is not possible, and it suffices to consider the identity decoding function. In other words, $ \widetilde{\Cscr} $ is an optimal questionnaire.

  \subsubsection{Linking responses can increase recovered histories on average}

  Now let $ n = 2 $, \ie, the sender has $2$-length travel histories. The questionnaire will thus be composed of sequences from $ \Xscr^2 $. From the additive nature of the utility given in \eqref{eq:avg-util-defn}, it is easy to see (we also show this formally) that a questionnaire $ \widetilde{\Cscr}^2:= \widetilde{\Cscr} \times \widetilde{\Cscr} = \{ 00,02,20,22\} $ recovers all the travel histories in $ \widetilde{\Cscr}^2 $. Thus, we have that $ |\widetilde{\Cscr}^2|^{1/2}  = 2 $ and the receiver recovers the same amount of information \textit{per  unit length of the history}\footnote{Since the total number of travel histories for any $ n $ are $ \Xscr^n $, a natural way to compare the recovery of information from of questionnaires across different lengths of travel histories is to look at the $n$-th root of the size of the questionnaire.} as in $ \widetilde{\Cscr} $. However, can the receiver do better?

  Suppose the receiver declares a questionnaire $ \widehat{\Cscr} = \{ 00, 21, 02, 23, 30 \}$. Let $ 00 $ be the true sequence and consider another sequence $ 21 $ from $ \widehat{\Cscr} $. We have that
  \begin{align}
    \ut(21,00) = \frac{1}{2}\left( \ut(2,0) + \ut(1,0) \right) = \frac{1}{2}(-1+0.5) < 0,\non
  \end{align}
whereby the sender does not prefer to report $ 00 $ as $ 21. $
  Although the sender prefers the location $ 1 $ over the true location $ 0 $, the sender has to trade-off this benefit with the loss derived by reporting the location $ 0 $ as $ 2 $. Since the penalty from the latter is more than the incentive derived by misreporting, the sender prefers to report $ 00 $ over the sequence $ 21 $. This can be repeated for all pairs of sequences in $ \widehat{\Cscr} $ to show that whenever the sender's true sequence is one from $ \widehat{\Cscr} $, it prefers to report it truthfully. Thus, the receiver can recover all the histories from $ \widehat{\Cscr} $. Notice that $ |\widehat{\Cscr}|^{1/2} = 5^{1/2} > 2 $. In fact, this is the largest size of questionnaire for $n = 2 $. In other words, larger value of $ n $ opens up the possibility of creating linkages across legs of the journey whereby more information can be recovered than by mere stacking of the optimal questionnaires corresponding to 1-length histories.

  From this short example, a few observations are immediately evident. It is clear that questionnaires should not be designed innocently and the receiver has to \textit{strategize} in order to extract information from the sender. In general, the receiver may be able to extract only a subset of the information from the sender. Finally, the receiver can recover more information on average when the responses of the sender are linked. We formalize these observations in this paper.

\subsection{Main results}

 We formulate this problem as a leader-follower game with the receiver as the leader and the sender as the follower and analyze it based on the Stackelberg equilibrium. The receiver's strategy comprises of two parts -- it must decide the sequences to be retained as alternatives in the questionnaire, and it must decide a mapping that `interprets' the response of the sender by mapping it to a decoded sequence. 
 Our measure of information is the number of distinct sequences that the receiver recovers   in a Stackelberg equilibrium.   
 We find that there exists a Stackelberg equilibrium in which only this subset of the $ q^n $ sequences are retained as alternatives, and the receiver applies an identity decoding mapping on them\footnote{In other words, the receiver believes the sender, or takes the sender's responses at face value.}. The sequences are chosen in such a way that when \textit{restricted} to the chosen subset, the sender's utility $ \ut_n $ is incentive compatible (even though it may not be so on the full set of sequences $ \Xscr^n $). Thus, given the alternatives presented in the questionnaire, any sender whose true sequence is equal to one of the alternatives, has no incentive to misreport its true sequence. In the process, each of the sequences in the questionnaire is   recovered. 
 
We define the limiting value (with increasing length of sequences, henceforth termed as blocklength) of the exponent of the size of this set of   recovered sequences as the \textit{information extraction capacity  of the sender}. This capacity, denoted by $ \Xi (\ut)$,  is a function of the single-letterized utility function $ \ut$ of the sender. Operationally $ \capacity $ can be interpreted as the growth rate of the number of alternatives that are included in the optimal questionnaire. Equivalently, the capacity captures the rate at which the \textit{sheet of paper} on which the questionnaire is printed -- which is the communication medium between sender and receiver -- must grow with increasing length of the history. It is thus a measure of the optimal amount of communication resources required while communicating with a strategic sender.

 We find that $ \capacity $ is the limit of the $n$-th root of the independence number of a certain sequence of graphs whose structure is determined by $ \ut $. This is analogous to the notion of the Shannon capacity of a graph \cite{shannon1956zero}, a well-known and notoriously hard-to-compute quantity from communication theory. In fact, we show that the information extraction capacity \textit{generalizes} the Shannon capacity of a graph. 
 We show that the capacity lies between two important quantities,
\[ \Gamma(\ut) \leq \Xi(\ut) \leq \Theta(\Gsym). \]
These bounds are independent of the blocklength $ n $ and are a function of only the utility. The lower bound $ \Gamma(\ut) $ is the optimal value of an optimization problem defined over a set of permutation matrices on the alphabet $ \Xscr $. The upper bound is the Shannon capacity of a graph $ \Gsym $ induced by the symmetric part of the utility $ \ut $.

This result has a number of consequences. First, barring some corner cases, we have $  \Gamma(\ut)>1 $, whereby   the receiver can recover an \textit{exponential} number of sequences, regardless of any assumption of incentive compatibility on $ \ut $. Second, the upper bound shows that $ \capacity $ is, in general, strictly \textit{less} than $ q $, whereby the maximum number of sequences recovered is \textit{exponentially smaller} than the total number of sequences.
Third, for the utilities where the bounds match, the capacity  is exactly characterized. Examples include cases where $ \ut $ symmetric and the corresponding $ \Gsym $ is a perfect graph. We discuss more such examples later in the paper.
A well-known semi-definite program based upper bound on the Shannon capacity of a graph is the Lov{\'a}sz theta function introduced by Lov{\'a}sz in \cite{lovasz1979shannon}. This, along with the  lower  bound, together provide two computable bounds that can approximate the capacity when it is not exactly characterized. 
We also  derive a hierarchy of lower bounds as a function of the blocklength $ n $  by generalizing the bound $ \Gamma(\ut) $, which approach the capacity asymptotically as $ n $ grows large. This allows one to approximate the capacity arbitrarily closely from the left-hand side, albeit with increasing computational burden.

When the channel is noisy, the rate of information extraction is given by $\min \{\capacity, \Theta(\Gc)\},  $ where $ \Gc $ is the confusability graph of the channel. This result shows that as long as the zero-error capacity of the channel is greater than $ \capacity $, the receiver can extract the maximum possible information from the sender, and otherwise it is limited by the channel capacity. This result is analogous to the setting of joint source-channel coding in information theory, where the capacity of the channel is required to be larger than the entropy of the source to ensure reliable communication (\cite{cover2012elements}, Ch. 7). 

A general lesson in these results is that there are fundamental limitations to the operation of decentralized systems with self-interested agents. Problems where agents are compromised also occur in \textit{cyber-physical systems}, where a network of sensors connected over a communication medium is tasked with the operation of safety-critical applications such as nuclear power plants. A compromised sensor may be modeled as non-cooperative sender as in our setting. Our results show on the one hand how a receiver may strategize to obtain information from such agents, and on the other that there will usually be blind spots in the knowledge of the receiver, regardless of how it strategizes. Finally, when compromised sensors are present, improving the communication bandwidth only has a limited effect (it increases $ \Theta(\Gc) $ above); it may help the receiver hear the sender more clearly, but not more truth.






\subsubsection{Receiver's dilemma}

The key idea in the Stackelberg equilibrium strategy of the receiver is to limit the options of the sender for lying by selectively decoding only a subset of the sequences. If the receiver attempts to correctly recover a larger number of sequences by including more alternatives in the questionnaire, then the sender gets greater freedom to lie about its information, and the attempt of the receiver is counter-productive. On the other hand, if the receiver chooses too few sequences to include in the questionnaire, then the sender is compelled to speak truth given the limited choices, but the number of sequences recovered is less than optimal. The information extraction capacity is the growth rate of the ``optimal'' set that balances these two aspects.

Another subtlety here is that a strategic sender is distinct from a \textit{defunct} sender. A defunct sender sends arbitrarily corrupted messages, whereas a strategic sender's messages being motivated by its utility have an underlying structure. The optimal questionnaire exploits this structure to obtain nontrivial information from the sender.

Finally, our definition of questionnaire is one where the sender is required to ``select any one'' from a set of alternatives. More general questionnaires can be considered -- \eg, those where the sender can ``select all that apply'', or those with a ``none of the above" alternative -- but these are beyond our present scope. 
\subsubsection{Relation to cooperative information theory}

In the non-cooperative setting we consider, finite blocklength leader-follower games take the place of finite blocklength coding problems from the cooperative setting, whereas Stackelberg equilibria take the place of codes. 
We find that $ \Xi(\ut) $ plays a role loosely analogous to that played by the entropy of a source in cooperative communication; the utility $ \ut $ of the sender is akin to the probability mass function of the source. This analogy agrees with the rate of information extraction with a noisy channel, given by $ \min\{\Xi(\ut),\Theta(\Gc)\} $. Having said that, this analogy is loose because we are not concerned with a stochastic setting in this paper (recall we work in the zero-error regime). A much more complicated setting would arise when considering vanishing probability of error; our preliminary work on this line can be found in~\cite{vora2020achievable}. 

Another related viewpoint from information theory is the notion of large blocklength analysis. Information theory shows that one can, in general,  communicate more information on average by exploiting structure in \textit{sequences}, rather than in individual symbols, an idea commonly referred to as \textit{coding}. The linking of decisions we exploit in our work is the analogue of coding in our strategic setting. 
In information theory it is of interest to study how the codes scale with the blocklength and thereby help quantify the improvement in accuracy of communication with increase in channel capacity. Our notion of the information extraction capacity is an attempt at capture the similar requirements in a strategic setting.

When viewed from the communication standpoint, our channel input and output spaces are both equal to the space of source sequences. In the cooperative setting (in the noiseless case) this would trivially lead to recovery of all the source sequences. However, the same does not hold in our setting since the receiver chooses to selectively decode only a portion of the outputs in its optimal strategy. 
Thus, our results also quantify the \textit{optimal} amount of channel resources that are required for the receiver to extract information from the sender. In a sense, this marks a shift from the communication-theoretic concept of the \textit{capacity} of a channel to that of \textit{capacity utilization}, as something more relevant for the non-cooperative setting. 

\subsection{Related work}

There have been works on strategic communication (of various flavours) in the game theory community, but to the best of our knowledge ours is the first formal information-theoretic analysis of information extraction. The first model of strategic communication was introduced by Crawford and Sobel in \cite{crawford1982strategic}. They considered a sender and a receiver with misaligned objectives and formulated a simultaneous move game between the sender and receiver. They showed that any equilibrium involves the sender resorting to a \textit{quantization} strategy, where the sender reports only the interval in which its information lies. Some variants and generalizations were subsequently studied in (Battaglini \cite{battaglini2002multiple}, Sarita{\c{s}} \textit{et al.} \cite{saritacs2015multi}). These works considered the Nash equilibrium solution of the game. Strategic communication in control theory has been studied by Farokhi \textit{et al.} in \cite{farokhi2016estimation} and Sayin \textit{et al.} in \cite{sayin2019hierarchical}. The authors in \cite{farokhi2016estimation} studied a problem of static and dynamic estimation in the presence of strategic sensors as a game between the sensors and a receiver. The authors in \cite{sayin2019hierarchical} considered a dynamic signaling game between a strategic sender and a receiver. Strategic communication has been studied from the perspective of information theory by Akyol \textit{et al.} in \cite{akyol2015privacy, akyol2016information} where they studied a sender-receiver game and characterized equilibria satisfying a certain rate and distortion levels. In \cite{akyol2016information}, they also analyzed the effect of side information at the receiver. Strategic communication game is also studied in the information design setting \cite{kamenica2011bayesian, bergemann2019information}, also called the Bayesian persuasion problem, where the sender with superior information tries to influence the actions of the receiver. Information theoretic analysis of Bayesian persuasion problem was studied by Le-Treust and Tomala in \cite{letruest2019persuasion} where they derived an upper bound on the payoff achieved by the sender while communicating across a noisy channel. The works \cite{farokhi2016estimation}-\cite{letruest2019persuasion} have formulated the game with the sender as the leader.



 Our work differs from the above models as follows. We study the problem from the perspective of the receiver and hence we formulate the game with the receiver as the leader. We consider a model where the malicious behaviour of the sender is explicitly governed by a utility function. Our main contribution is the notion of information extraction capacity and the lower and upper bounds on this quantity. 

As mentioned earlier, our setting can be viewed as a problem of implementing a social choice function that may not be incentive compatible with the preferences of the agents. A related setting is studied by Jackson and Sonnenschein in \cite{jackson2007overcoming} where they demonstrated that these incentive constraints can be overcome by \textit{linking} independent copies of the decision making problem. They devised a mechanism and showed that as the number of linkages grow large, the mechanism implements the social choice function \textit{asymptotically}, \ie, the probability of the decisions on which the function cannot be implemented tends to zero. There are certain parallels between our setting and the setting considered in \cite{jackson2007overcoming}. For instance, the linking in \cite{jackson2007overcoming} is akin to the block structure of our setting and the implementation of the function is analogous to information recovery by the receiver. Viewed in this manner, our inquiry can be stated as follows -- how many decisions can be implemented by the social choice function exactly? Further, how does this number grow with the length of sequence of signals? Our setting is thus a \textit{zero-error} counterpart of the implementation problem.

In the context of information theory, our problem relates to the problem of coding in presence of mismatched criteria that has been studied extensively (see \cite{scarlett2020information} for a survey). The mismatch is in the encoding and decoding criteria and is to model the inaccurate or asymmetric information about the channel or to incorporate constraints on encoding or decoding. The optimal functions of the encoder and decoder in such cases are therefore chosen cooperatively, \ie, they are chosen with the common aim of achieving communication between the sender and the receiver. Thus, they do not capture the strategic nature of the problem. The problem thematically closest to our setting is the mismatched distortion problem studied by Lapidoth in \cite{lapidoth1997role}. In this problem, the distortion criteria of the receiver and the sender are mismatched, and the receiver aims to construct a codebook such that its own distortion is minimized. The author determines an upper bound on the distortion for a given rate of communication. A crucial difference between the setting of \cite{lapidoth1997role} and our setting is that in the former, the objective of the sender does not depend on the sequence decoded by the receiver. This fails to capture the strategic nature of our problem where the sender is indeed affected by the actions of the receiver is therefore trying to influence the outcome by misreporting.  

Our work significantly extends the results of \cite{vora2021optimal,vora2020zero}. In \cite{vora2021optimal}, we considered a situation where a health inspector designs a questionnaire to screen travellers. In \cite{vora2020zero} we discussed a special case of sender's utility and showed that the information extraction capacity is bounded above by the Shannon capacity of a certain graph. We studied related strategic communication problems in \cite{vora2020achievable, vora2020communicating}  where the receiver tried to achieve asymptotically vanishing probability of error.  In \cite{vora2020achievable}, we considered an unconstrained rate setting and determined the rates achievable for reliable communication. In \cite{vora2020communicating}, we considered a rate limited setting and we determined sufficient and necessary conditions for reliable communication.

 The paper is organized as follows. We formulate the problem in Section~\ref{sec:prob_form}. In Section~\ref{sec:equil_noiseless}, we determine the equilibrium of the Stackelberg game with the noiseless channel. In Section~\ref{sec:exist_capa}, we discuss the existence of information extraction capacity and in Section~\ref{sec:lower_bound} and ~\ref{sec:upper_bound}, we derive lower bounds and upper bounds respectively on the capacity. Finally, we analyze the Stackelberg game with the noisy channel in Section~\ref{sec:equil_noisy}. Section~\ref{sec:concl} concludes the paper.

\section{Problem Formulation}
\label{sec:prob_form}

\subsection{Notation}

Random variables are denoted with upper case letters $X, Y, Z$ and their instances are denoted as lower case letters $x, y, z$. Matrices are also denoted by uppercase letters. The space of scalar random variables is denoted by the calligraphic letters such as $\Xscr$ and the space of $ n $-length vector random variables is denoted as $\Xscr^n$. To unclutter notation, vector random variables $X, Y, Z$ and their instances $x, y, z$ will be denoted without the superscript $ n $.  The set of probability distributions on a space `$ \cdot $' is denoted as $\Pscr(\cdot)$. The empirical distribution of a sequence $ x \in \Xscr^n $ is denoted as $ P_x $ and is defined as $ P_x(i) = |\{ k : x_k = i\}|/n $. The joint empirical distribution of sequences $ (x,y) $ is defined similarly and is denoted as $ P_{x,y} $. A graph $G$ is denoted as $G = (V, E)$ where $V$ is the set of vertices and $E$ is the set of edges. When two vertices $x,y \in V$ are adjacent, we denote it either as $(x,y) \in E$ or as $x \sim y$. An independent set in $ G $ is a subset $ S $ of $ V $ such that no two vertices in $ S $ are adjacent. For a graph $G$, the size of the largest independent set is denoted as $\alpha(G)$. For a function or a random variable, we denote $\supp(\cdot)$ as its support set. For an optimization problem `$ \cdot $', we denote $\OPT(\cdot)$ as its optimal value. Unless specified, the $ \exp $ and $ \log $ are with respect to the base $ 2 $. For any $ n \in \Nbb $, we denote $ \{1,\hdots,n \} $ by $ [n] $.


\subsection{Model}
\label{sec:noiseless}

We present a model where the medium of communication is noiseless; it will generalized later to allow for noisy communication. Let the alphabet be $\Xscr = \{0,1,\hdots,q-1\}$, where $q \in \Nbb$ is the alphabet size. The sender observes a sequence $X = (X_1,\hdots, X_n) \in \Xscr^n $, where $X_i$ are independent and identically distributed symbols drawn from a known distribution\footnote{In the context of travellers, the term `observes' implies that the sender \textit{arrives} with a sequence of travel locations.}. The sender sends a message  $s_n(X) = Y \in \Xscr^n$, where $s_n : \Xscr^n \rarr \Xscr^n$. 
 The message is relayed perfectly to the receiver who decodes the message as $g_n(Y) = \wi{X}$, where $g_n : \Xscr^n \rarr \Xscr^n \cup \{\Delta\} $. Here $\Delta$ is an error symbol we introduce for convenience; we explain its meaning subsequently.
Let
\begin{align}
\Dscr(g_n,s_n) :=  \curly{x \in \Xscr^n \;|\; g_n \circ s_n(x) = x} \label{eq:decod-set-noiseless}
          \end{align}
          be the set of   recovered sequences when the receiver plays the strategy $g_n$ and the sender plays the strategy $s_n$. We also refer to the set of recovered sequences $ \Dscr(g_n,s_n) $ as the set of sequences \textit{correctly} decoded by the receiver.
          
 The receiver aims to maximize the size of the set $\Dscr(g_n,s_n)$ by choosing an appropriate strategy $g_n$. The sender, on the other hand, maximizes  $\ut_n(g_n \circ s_n(x),x)$ for each $ x $ by choosing an appropriate strategy $s_n$, where $\ut_n : (\Xscr^n \cup \{\Delta\}) \times \Xscr^n \rarr \Real$.  
The utility of the sender when $x$ is the true source sequence and $\wi{x}$ is the sequence decoded by the receiver is given by $ \ut_n(\wi{x},x) $ defined as \eqref{eq:avg-util-defn}.  Further, we assume   $\ut_n(\Delta,x) = -\infty$ for all $x \in \Xscr^n$ and $n \in \Nbb$. Thus $ \Delta $ is an outcome that is never preferred by the sender. 
We also assume that $ \ut(i,i)=0 $ for all $ i\in \Xscr $; this is without loss of generality as we explain below.

The operational meaning   of the above model in the context of the questionnaire design is as follows. The alphabet $ \Xscr $ is the set of locations and $ n $ is the length of the history. The message $ Y =s_n(X)$ is the alternative selected by a traveller with history $ X \in \Xscr^n $. $ g_n(Y) $ is the interpretation applied by the health inspector to the alternative $ Y $. Since $ \Delta $ is an outcome never preferred by the sender, if an alternative $ Y $ is mapped to $ \Delta $, it is equivalent to the alternative $ Y $ being not present in the questionnaire. Thus  $\Cscr_n :=\{y \in \Xscr^n \;|\; g_n(y) \neq \Delta \} $ is the set of alternatives presented to the travellers  and a traveller has to choose exactly one of the alternatives from this list. Thus if $ x $ is the true travel history of the traveller, then selecting an alternative $ y \in \Cscr_n $ amounts to setting $ s_n(x) = y $. The inspector then, is said to decode the response of the traveller to a travel history $ \xhat \in \Xscr^n $ if $ \xhat = g_n(y) $. The inspector recovers the travel history $ x $ if and only if $ g_n \circ s_n(x) = x $.

 We formulate this problem as a leader-follower game  with the receiver as the leader and the sender as the follower \cite{basar99dynamic}.
 \begin{definition}[Stackelberg equilibrium]\label{defn:stack-equi-noiseless}
The Stackelberg equilibrium strategy of the receiver is given as
\begin{align}
g_n^* \in \argmax_{g_n} \min_{s_n \in \best(g_n)} |\Dscr(g_n,s_n)|, \label{eq:rec-opt-stra-game-noiseless} 
\end{align}
where the set of best responses of the sender, $\best(g_n)$, is given as
\begin{align}
  \best(g_n) = {\Big\{} &s_n : \Xscr^n \rarr \Xscr^n \;|\; 
  \ut_n(g_n \circ s_n (x),x) \geq \ut_n(g_n \circ s_n' (x),x) 
  \quad \forall \; x \in \Xscr^n, \forall \; s_n' {\Big\}}. \label{eq:sen-opt-stra-game-noiseless}
\end{align}   
Any strategy $ s_n^* \in \best(g_n^*) $ is said to be a Stackelberg equilibrium strategy of the sender and the pair $ (g_n^*,s_n^*) $ is said to be a Stackelberg equilibrium.
 \end{definition}
It is easy to see that the set the set of best responses $ \best(g_n) $ is the same if $ \ut_n(x,x) $ is subtracted on both sides of the inequality in \eqref{eq:sen-opt-stra-game-noiseless}. 
 Thus, without loss of generality,  we assume  $ \ut(i,i) = 0 $ for all $ i \in \Xscr $.
 

In \eqref{eq:rec-opt-stra-game-noiseless}, we minimize over the set $\best(g_n)$ of the best responses of the sender because the sender may have multiple best responses and the receiver does not have control over the choice of the sender's specific best response strategy. We assume that the receiver chooses its strategy according to the worst-case over all such best responses and hence adopts a \textit{pessimistic} viewpoint. This is also the formulation of Stackelberg equilibrium adopted in standard sources such as \cite{basar99dynamic}.

Note that the problem of information extraction we study is distinct from the problem of \textit{information design} \cite{kamenica2011bayesian, bergemann2019information} or \textit{information disclosure} \cite{akyol2016information}. In these cases, the problem is studied from the perspective of the sender who designs the information that is observed by the receiver so as to achieve an outcome that favours the sender. Thereby, in such a setting it is suitable to formulate the problem with the sender as the leader. 
 On the other hand, in the case of screening of travellers, it is the role of the receiver to `ask' the agents about their information. Therefore, it is apt to study the problem with the receiver as the leader of the game. 

\section{Information Extraction from the Sender in Equilibrium}
 \label{sec:equil_noiseless}
 
In this section, we characterize the size of the largest set of sequences recovered by the receiver in a Stackelberg equilibrium and its growth rate with the blocklength $ n $.
In general for a fixed receiver strategy $ g_n $, the size $ |\Dscr(g_n,s_n) |$ of the set of recovered sequences could vary as the best response $ s_n $ varies over $ \best(g_n) $. We consider the smallest such size as our notion of the number of recovered sequences. To characterize the growth rate of this size, we define the \textit{rate} of information extraction as follows.
\begin{definition}[\textit{Rate of information extraction for a strategy}] \label{defn:rate-noiseless}
The number of   recovered sequences by a receiver strategy $ g_n $ is defined as $$ \min_{s_n \in \best(g_n)} |\Dscr(g_n,s_n)|. $$	
	For any strategy $ g_n $ of the receiver, the rate of information extraction is defined as
	\begin{align}
		R(g_n) = \min_{s_n \in \best(g_n)} |\Dscr(g_n,s_n)|^{1/n}. \non
	\end{align}
\end{definition}

Below we characterize the number of sequences recovered in a Stackelberg equilibrium (Definition~\ref{defn:stack-equi-noiseless}) in terms of a graph induced by the utility of the sender on the space of sequences $\Xscr^n$, called the \textit{sender graph}.
\begin{definition} [Sender graph] \label{defn:util-graph}
The sender graph, denoted as $ \Gs^n = (\Xscr^n,E_{\sf s}) $, is the graph where $ (x,y) \in E_{\sf s} $ if either
  \begin{align}
\ut_n(y,x) \geq  0   \;\; \mbox{or}\;\;    \ut_n(x,y) \geq  0. \non
  \end{align}
For $n = 1$, the graph $\Gs^1$ is denoted as $\Gs$ and referred to as the \textit{base graph}.
\end{definition}
Thus, two vertices $x$ and $y$ are adjacent in $\Gs$ if the sender has an incentive to report one sequence as the other. 

\begin{remark}
  Two single-letter utility functions inducing the same base graph on $\Xscr$ can induce two different sequences of sender graphs on $\Xscr^n$ for     $ n > 1$. Let $\Xscr = \{0,1,2\}$ and consider the utilities $\ut$ and $ \ut' $ defined  as
   \begin{align}
\ut = \left(\begin{array}{c c c}
 0&  -1 &-2 \\
 1 &0 &-1  \\
    -1 &0& 0 
            \end{array} \right), \quad
           \ut' = \left(\begin{array}{c c c}
  0 & -2.5 & -2.5 \\
  1 & 0 & -1 \\
  -1.5 & 0 & 0
\end{array} \right). \non 
   \end{align}
 It is easy to see that the graph $ \Gs $ and $ \Gs' $ induced on $\Xscr$ by $ \ut $ and $ \ut' $ respectively is a path $0 - 1 - 2$.
   
   Now take $n = 2$ and consider the graphs $\Gs^2$ and $\Gs'^2$ induced by $\ut_2$ and $\ut_2'$ respectively. It can be shown that $01$ and $10$ are adjacent in the graph $\Gs^2$, but are not adjacent in the graph $\Gs'^2$. Thus, although $\Gs$ and $\Gs'$ are same, $\Gs^2$ and $\Gs'^2$ are different. In short, the graphs $\{\Gs^n\}$ are defined by the utility $\ut$ rather than $\Gs$ alone.
\end{remark}

We show that only those sequences can be recovered   by the receiver that form an independent set in  $ \Gs^n $. 
\begin{lemma}\label{lem:dec-set-ind-set}
Let $n \in \Nbb$. Consider a sender with utility $\ut$ and let $\Gs^n$ be the corresponding sender graph. For any strategy $g_n$  define,
     \begin{align}
       \pbest(g_n) &= \argmin_{s_n \in \best(g_n)} |\Dscr(g_n,s_n)|. \non
     \end{align}
     Then, for all strategies $s_n \in  \pbest(g_n)$, $\Dscr(g_n,s_n)$ is an independent set in $\Gs^n$.
   \end{lemma}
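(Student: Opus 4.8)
The plan is to argue by contradiction. Suppose $s_n \in \pbest(g_n)$ but $\Dscr(g_n,s_n)$ is \emph{not} independent in $\Gs^n$, i.e.\ there are two distinct sequences $x,y \in \Dscr(g_n,s_n)$ with $(x,y) \in E_{\sf s}$. By Definition~\ref{defn:util-graph} this adjacency means $\ut_n(y,x)\ge 0$ or $\ut_n(x,y)\ge 0$; since the condition is symmetric in the pair, relabelling if necessary we may assume $\ut_n(y,x)\ge 0$. The goal is then to exhibit another best response $\tilde s_n \in \best(g_n)$ with $|\Dscr(g_n,\tilde s_n)| < |\Dscr(g_n,s_n)|$, contradicting that $s_n$ minimises $|\Dscr(g_n,\cdot)|$ over $\best(g_n)$.

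First I would use $x,y \in \Dscr(g_n,s_n)$, which gives $g_n(s_n(x)) = x$ and $g_n(s_n(y)) = y$. Consider the source realisation $x$ and the deviation in which the sender transmits $s_n(y)$ in place of $s_n(x)$: this yields utility $\ut_n(g_n(s_n(y)),x) = \ut_n(y,x)$, whereas the prescribed play yields $\ut_n(g_n(s_n(x)),x) = \ut_n(x,x) = 0$. Because $s_n \in \best(g_n)$, the best-response inequality \eqref{eq:sen-opt-stra-game-noiseless} evaluated at source $x$ forces $0 \ge \ut_n(y,x)$; combined with $\ut_n(y,x)\ge 0$ this gives $\ut_n(y,x) = 0$, so the deviation is itself optimal at $x$.

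Next I would define $\tilde s_n$ to coincide with $s_n$ on all source sequences except $x$, where $\tilde s_n(x) := s_n(y)$. Since the sender's objective in \eqref{eq:sen-opt-stra-game-noiseless} is imposed coordinate-wise over source realisations, and the only altered coordinate $x$ now attains the value $\ut_n(y,x) = 0 = \ut_n(x,x)$, which was already optimal there, we get $\tilde s_n \in \best(g_n)$. Finally I would track the decoding set: for every $w \ne x$ nothing changes, so $w \in \Dscr(g_n,\tilde s_n)$ iff $w \in \Dscr(g_n,s_n)$; and for $w = x$ we have $g_n(\tilde s_n(x)) = g_n(s_n(y)) = y \ne x$ (using $x \ne y$), so $x \notin \Dscr(g_n,\tilde s_n)$. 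Hence $\Dscr(g_n,\tilde s_n) = \Dscr(g_n,s_n)\setminus\{x\}$ has strictly smaller cardinality, contradicting $s_n \in \pbest(g_n)$; therefore $\Dscr(g_n,s_n)$ must be independent in $\Gs^n$.

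The argument is a local perturbation of the sender's strategy followed by an elementary counting step, so there is no heavy computation. The only point requiring care is the claim that $\tilde s_n$ remains a best response: this rests on two observations made above, namely that \eqref{eq:sen-opt-stra-game-noiseless} decouples across source realisations (so modifying $s_n$ at a single point does not affect optimality at other points) and that the modified value at $x$ equals the previously optimal value $0$ thanks to the derived equality $\ut_n(y,x)=0$. I expect this verification to be the main --- and essentially only --- obstacle.
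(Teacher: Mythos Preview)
Your proof is correct and follows essentially the same route as the paper's: assume a dependent pair $x,y$ in $\Dscr(g_n,s_n)$, redirect the sender at $x$ to $s_n(y)$, verify the modified strategy remains a best response, and observe that the decoded set shrinks by exactly one element, contradicting minimality. The only cosmetic difference is that you insert the intermediate step $\ut_n(y,x)=0$ (via the best-response inequality at $x$) before concluding $\tilde s_n\in\best(g_n)$, whereas the paper simply notes that $\ut_n(g_n\circ\bar s_n(\bar x),\bar x)\ge \ut_n(g_n\circ s_n(\bar x),\bar x)$ and chains this with optimality of $s_n$; both reach the same conclusion.
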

   \begin{proof}
     See Appendix~\ref{appen:lem3.1}.
   \end{proof}

Thus, $ R(g_n) $ can be at most $ \alpha(\Gs^n)^{1/n} $ for any strategy $ g_n $ of the receiver.  The next theorem shows  that by choosing an appropriate strategy $g_n$, the receiver can recover any of the  largest independent sets of $\Gs^n$ and consequently achieve the rate $ \alpha(\Gs^n)^{1/n} $.
\begin{theorem} \label{thm:equil-dec-set-noiseless}
Let $n \in \Nbb$. Consider a sender with utility $\ut$ and let $\Gs^n$ be the corresponding sender graph. For all Stackelberg equilibrium strategies $g_n^*$ of the receiver,
  \begin{align}
R(g _n^*) = \alpha(\Gs^n)^{1/n}. \non
  \end{align}
\end{theorem}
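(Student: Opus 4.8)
The plan is to establish the theorem in two halves: (i) $R(g_n^*) \le \alpha(\Gs^n)^{1/n}$ for every Stackelberg equilibrium strategy $g_n^*$, and (ii) there is \emph{some} receiver strategy $\ghat_n$ with $R(\ghat_n) \ge \alpha(\Gs^n)^{1/n}$, so that optimality in \eqref{eq:rec-opt-stra-game-noiseless} forces every equilibrium strategy to meet the bound. Part (i) is immediate from Lemma~\ref{lem:dec-set-ind-set}: for any $g_n$, picking $s_n \in \pbest(g_n)$, the set $\Dscr(g_n,s_n)$ is independent in $\Gs^n$, so $|\Dscr(g_n,s_n)| \le \alpha(\Gs^n)$, hence $R(g_n) = \min_{s_n \in \best(g_n)}|\Dscr(g_n,s_n)|^{1/n} \le \alpha(\Gs^n)^{1/n}$. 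Since this holds for all $g_n$, it holds for $g_n^*$.

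For part (ii), I would mimic the construction in Example~\ref{eg:noiseless}: let $I \subseteq \Xscr^n$ be a maximum independent set of $\Gs^n$, so $|I| = \alpha(\Gs^n)$, and define $\ghat_n$ by $\ghat_n(x) = x$ for $x \in I$ and $\ghat_n(x) = \Delta$ for $x \notin I$. The key structural claim is that for \emph{every} $s_n \in \best(\ghat_n)$ we have $I \subseteq \Dscr(\ghat_n,s_n)$, i.e. the sender is forced to report truthfully on all of $I$. To see this, fix $x \in I$. The sender's available outcomes $\ghat_n \circ s_n(x)$ all lie in $I \cup \{\Delta\}$; since $\ut_n(\Delta,x) = -\infty$, the sender strictly prefers any outcome in $I$ over $\Delta$, so $\ghat_n \circ s_n(x) \in I$. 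Now suppose $\ghat_n \circ s_n(x) = x' \in I$ with $x' \ne x$; then the sender is receiving utility $\ut_n(x',x)$. Because $I$ is independent in $\Gs^n$ and $x,x'$ are distinct vertices of $I$, by Definition~\ref{defn:util-graph} we have both $\ut_n(x',x) < 0$ and $\ut_n(x,x') < 0$. But the sender could instead choose any $s_n'$ with $s_n'(x) = x$, yielding $\ghat_n \circ s_n'(x) = x$ and utility $\ut_n(x,x) = 0 > \ut_n(x',x)$, contradicting $s_n \in \best(\ghat_n)$. Hence $\ghat_n \circ s_n(x) = x$ for all $x \in I$, so $I \subseteq \Dscr(\ghat_n,s_n)$, giving $|\Dscr(\ghat_n,s_n)| \ge \alpha(\Gs^n)$ for every best response, and therefore $R(\ghat_n) \ge \alpha(\Gs^n)^{1/n}$.

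Combining the two parts: by optimality, $R(g_n^*) = \max_{g_n} R(g_n) \ge R(\ghat_n) \ge \alpha(\Gs^n)^{1/n}$, while part (i) gives $R(g_n^*) \le \alpha(\Gs^n)^{1/n}$, so equality holds. I expect the main obstacle to be the careful handling of the ``forced truthfulness'' argument --- specifically making sure that (a) the existence of \emph{some} best response is not an issue (one can exhibit the identity-on-$I$ encoder as a feasible response, or note $\best(\ghat_n)$ is nonempty since the sender faces a finite menu for each $x$), and (b) the argument correctly uses that \emph{both} directions of the edge condition fail on an independent set, which is exactly what rules out the sender profitably swapping one element of $I$ for another. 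A minor point worth stating is that $\ghat_n$ as defined has $\Delta$ in its range, which is permitted since $g_n : \Xscr^n \to \Xscr^n \cup \{\Delta\}$; alternatively one can replace $\Delta$ by any fixed element of $I$ on $\Xscr^n \setminus I$ and rerun the same argument, as was done with $\widetilde g$ versus $\wi g$ in Example~\ref{eg:noiseless}.
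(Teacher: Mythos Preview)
Your proof is correct and follows essentially the same approach as the paper: the upper bound via Lemma~\ref{lem:dec-set-ind-set}, and the achievability via the receiver strategy that is the identity on a maximum independent set $I$ of $\Gs^n$ and $\Delta$ elsewhere, with the forced-truthfulness argument exploiting that $\ut_n(x',x)<0$ for distinct $x,x'\in I$. The paper additionally notes that in fact $\Dscr(\ghat_n,s_n)=I$ (not just $\supseteq$) for every best response, but your weaker inclusion already suffices for the conclusion.
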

\begin{proof}
  See Appendix~\ref{appen:thm3.2}.
\end{proof}

We show in the proof of the above theorem that it is sufficient for the receiver to choose a strategy $ g_n $ as
  \begin{align}
    g_n(x) = \left\{
    \begin{array}{c l}
      x  & \mbox{if} \; x \in I^n \\
      \Delta & \mbox{if} \; x \notin I^n
    \end{array}
\right., \label{eq:defn-gn-thm}
  \end{align}
where $ I^n $ is any largest independent set in $ \Gs^n. $
  Thus, the receiver decodes meaningfully only for sequences in $ I^n $. For the rest of the messages, the receiver maps them to $\Delta$. Operationally we can interpret this strategy as a questionnaire in which only the sequences from $ I^n $ are retained as alternatives and all other sequences are dropped. Any alternative selected by the sender is taken at face value by the receiver, \ie, the receiver applies an identity decoding function. Since $ I^n $ is an independent set in $ \Gs^n $ it follows from Definition~\ref{defn:util-graph} 
  that 
  \[ \ut_n(x,x) > \ut_n(y,x) \qquad \forall x,y \in I^n, x \neq y. \]
  In other words $ \ut_n $ is \textit{incentive compatible} with the identity function \textit{when restricted to the set} $ I^n.$ Thus truth-telling is the best response for senders whose type belongs to $ I^n$, whereby $ I^n $ is recovered by the receiver. For sequences that do not belong to $ I^n $, the sender selects the sequence from $ I^n $ that gives it the maximum utility from amongst sequences in $ I^n $. These latter sequences are not correctly recovered by the receiver. 
  
  One may wonder if including sequences from $ \Xscr^n \backslash I^n $ could lead to even more sequences recovered. This is false -- inclusion of even a single additional sequence from $ \Xscr^n \backslash I^n $ will lead to at least one sequence not being recovered. This is because $ I^n $ is a maximum independent set, whereby every sequence in $ x \in \Xscr^n \backslash I^n $ is adjacent to at least one sequence in $ y \in I^n, $ and hence  the sender has an incentive to either report $ x $ as $ y $ or $ y $ as $ x $. In other words, at most one of $ x $ and $ y $ can be recovered.
  Finally, note that when $ \ut_n $ is incentive compatible on $ \Xscr^n $, we have $ I^n = \Xscr^n $ and every sequence is recovered in the equilibrium. 
  

\section{Information Extraction Capacity of the Sender}
\label{sec:exist_capa}

Theorem~\ref{thm:equil-dec-set-noiseless} shows that the maximum information the receiver can extract from the sender is equal to $ \alpha(\Gs^n) $ and the maximum rate is given by the $\alpha(\Gs^n)^{1/n} $. The limiting value of this quantity as $ n\rightarrow \infty $ indicates a fundamental limit to the amount of information that is obtainable from such a strategic sender. We call this the \textit{information extraction capacity} of the sender.
 

\begin{definition}[\textit{Information extraction capacity of a sender}] \label{defn:inf-ext-cap}
	Consider a sender with utility $\ut$ and let $ \{\Gs^n\}_{ n\geq 1 } $ be the corresponding  sequence of sender graphs. The information extraction capacity of the sender is defined as
	\begin{align}
	\capacity = \lim_{n \rarr \infty} \alpha(\Gs^n)^{1/n}. \non
	\end{align}
\end{definition}
We show the existence of the limit in Definition~\ref{defn:inf-ext-cap} in Appendix~\ref{appen:thm4.1}.  
If  $ \capacity $ is greater than unity, the receiver can extract an \textit{exponentially} large number of sequences from the sender when the channel is noiseless, whereas if $ \capacity=1 $, then asymptotically only a vanishing fraction of sequences can be recovered. If $ \capacity =q$, it means that almost all sequences can be recovered and the number of sequences \textit{not} recovered is asymptotically a vanishing fraction of the total number of sequences. 

Considering that we make no assumption about incentive compatibility on $ \ut $, it is rather interesting to note that $ q>\capacity >1$, except in some corner cases. We show this and other properties of the information extraction capacity in the following sections.




\subsection{Information extraction capacity generalizes Shannon capacity}

Our definition of the information extraction capacity is inspired by the definition of the Shannon capacity of a graph as given in \cite{lovasz1979shannon}.
The Shannon capacity is given in terms of the strong product graph which is defined as follows. %
\begin{definition} \label{defn:strong-pdt}
  $1)$ \textit{Strong product}: Let $G_1 = (V_1,E_1)$ and $G_2 = (V_2,E_2)$ be two graphs. Then the strong product of the graphs $G_1, G_2$ is given by a graph $G = (V,E)$ where $V = V_1 \times V_2$. Further, two vertices $(x,x'), (y,y') \in V$, with $x,y \in V_1$ and $x',y' \in V_2$, are adjacent if and only if one of the following holds
  \begin{itemize}
  \item $x = y$ and $x' \sim y'$
  \item $x \sim  y$ and $x' = y'$
  \item $x \sim y$ and $x' \sim y'$
    \end{itemize}
    The strong product operation is denoted as $\boxtimes$ and the product graph $G$ is written as $G = G_1 \boxtimes G_2$.
    
$2)$ \textit{Strong product graph}: The strong product graph denoted as $G^{\boxtimes n}$ is the graph constructed by taking the $n$-fold strong product of the graph $G$, \ie,
\begin{align}
    G^{\boxtimes n} = \underbrace{G \boxtimes G \boxtimes \hdots \boxtimes G}_{n}. \non
\end{align}
\end{definition}

\begin{definition}[\textit{Shannon capacity}] \label{defn:shannon-cap}
	Let $G$ be any graph. The Shannon capacity of $G$ is defined as 
	\begin{align}
	\Theta(G) = \lim_{n \rarr \infty} \alpha(G^{\boxtimes n})^{1/n}, \non
	\end{align}
	where $G^{\boxtimes n}$ is the $n$-fold strong product given by Definition~\ref{defn:strong-pdt}.
\end{definition}

In \cite{shannon1956zero}, Shannon investigated the problem of computing the maximum number of messages that can be transmitted across a noisy channel such that the receiver can recover the messages with zero probability of error. He introduced the notion of the \textit{confusability graph} $G$ induced by this channel and showed that for any blocklength $n$, $\alpha(G^{\boxtimes n})$ is the maximum number of messages that can be communicated perfectly across the channel. The limit of the quantity $ \alpha(G^{\boxtimes n}))^{1/n} $ was termed as the \textit{zero-error capacity} of the channel, also known as the Shannon capacity of a graph $ G $.

The Shannon capacity of a graph $ \Theta(G) $ is an important quantity with applications in combinatorics and computer science as well. It is, however, found to be very hard to compute barring some simple cases. For instance, the capacity of the pentagon graph was unknown for about $ 20 $ years. Further, the capacity of a heptagon graph is still unknown. A computable semi-definite program-based upper bound was introduced by Lov{\'a}sz in \cite{lovasz1979shannon}, now known as the Lov{\'a}sz theta number. The introduction of this quantity was a significant development and is known to lie between the NP-hard clique and chromatic numbers of a graph. However, it is known that the Lov{\'a}sz theta number is not a tight upper bound. For more discussion on this subject of capacity of a graph and its variations, the reader is referred to \cite{korner1998zero} and the references therein.

Remarkably, we show that the information extraction capacity in fact generalizes the Shannon capacity of a graph.

  \begin{theorem}\label{thm:generalization-shannon-cap}
  Consider the graph $ G $ and let the adjacency matrix of the graph be denoted as  $ \Ascr $.  Define  a utility $ \ut : \Xscr \times \Xscr \rarr \Real  $ as 
  \begin{align}
\ut(i,j) =     \left\{\begin{array}{c l}
                        0  & \mbox{if} \;\; i = j  \mbox{ or } \Ascr(i,j) = 1 \\
                        -1 & \mbox{if} \;\; i \neq j \mbox{ and } \Ascr(i,j) = 0
    \end{array}\right.. \non
  \end{align}
    Then, 
\begin{align}
\capacity = \Theta(G). \non
\end{align}
\end{theorem}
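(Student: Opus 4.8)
The plan is to show that for this particular $\ut$ the sender graph $\Gs^n$ coincides, as a graph on $\Xscr^n$, with the $n$-fold strong product $G^{\boxtimes n}$; then $\alpha(\Gs^n)=\alpha(G^{\boxtimes n})$ for every $n\in\Nbb$, and taking $n\th$ roots and passing to the limit (which exists for $\capacity$ by Theorem~\ref{thm:exist-stra-cap}, and for $\Theta(G)$ by the same super-multiplicativity argument, cf.\ Definition~\ref{defn:shannon-cap}) yields $\capacity=\Theta(G)$. Note one cannot simply invoke ``$\Gs^n$ is a strong power of $\Gs$'', since the Remark after Definition~\ref{defn:util-graph} shows $\Gs^n$ depends on $\ut$ and not on $\Gs$ alone; the argument must go through $\ut_n$ directly, and what makes it work here is the sign structure of this $\ut$.

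The key observation is about a single letter: by construction $\ut(i,j)\in\{0,-1\}$, with $\ut(i,j)=0$ if and only if $i=j$ or $\Ascr(i,j)=1$, \ie\ iff $i$ and $j$ are equal or adjacent in $G$ (in particular $\ut(i,i)=0$, consistent with the standing assumption). Hence for any $x,y\in\Xscr^n$ every summand of $\ut_n(y,x)=\frac1n\sum_{i=1}^n\ut(y_i,x_i)$ is nonpositive, so $\ut_n(y,x)\ge 0$ forces $\ut_n(y,x)=0$, which forces $\ut(y_i,x_i)=0$ for every $i$, that is, $x_i=y_i$ or $x_i\sim_G y_i$ for all $i$. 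Moreover $\Ascr$, being the adjacency matrix of an undirected graph, is symmetric, so this condition is invariant under exchanging $x$ and $y$; thus $\ut_n(y,x)\ge 0\iff\ut_n(x,y)\ge 0$, and the disjunction in Definition~\ref{defn:util-graph} is redundant here. Consequently, for distinct $x,y\in\Xscr^n$, $(x,y)\in E_{\sf s}$ if and only if for every coordinate $i$, $x_i=y_i$ or $x_i\sim_G y_i$.

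Next I would match this with the strong product. Unwinding Definition~\ref{defn:strong-pdt} inductively on $n$, two distinct vertices $x,y\in\Xscr^n$ are adjacent in $G^{\boxtimes n}$ exactly when, in each coordinate $i$, either $x_i=y_i$ or $x_i\sim_G y_i$ (the requirement $x\ne y$ being automatically met). This is literally the adjacency condition for $\Gs^n$ derived above, so $\Gs^n=G^{\boxtimes n}$, and in particular $\alpha(\Gs^n)=\alpha(G^{\boxtimes n})$ for all $n$. Raising to the power $1/n$ and letting $n\rarr\infty$ gives $\capacity=\lim_n\alpha(\Gs^n)^{1/n}=\lim_n\alpha(G^{\boxtimes n})^{1/n}=\Theta(G)$.

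I do not expect a genuine obstacle here: the whole content is the elementary remark that, since every single-letter utility value is $\le 0$, the block utility being $\ge 0$ collapses to equality and hence to a coordinatewise ``edge-or-equality'' condition on $G$ — this is precisely what pins the a priori looser sender-graph adjacency onto strong-product adjacency — together with the symmetry of $\Ascr$, which erases the one-sided ``or'' in the definition of $\Gs^n$. The only thing to be careful about is to argue via $\ut_n$ rather than via $\Gs$, for the reason noted in the first paragraph.
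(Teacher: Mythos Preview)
Your proof is correct and follows essentially the same approach as the paper: both arguments show $\Gs^n = G^{\boxtimes n}$ by observing that, since every single-letter value of $\ut$ is nonpositive, adjacency in $\Gs^n$ collapses to the coordinatewise ``equal or adjacent in $G$'' condition defining the strong product. Your write-up is simply more explicit about why the nonpositivity forces all summands to vanish and why the symmetry of $\Ascr$ makes the disjunction in Definition~\ref{defn:util-graph} redundant, but the core idea is identical.
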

\begin{proof}
  See Appendix~\ref{appen:thm4.2}.
\end{proof}
Thus, the information extraction capacity generalizes the notion of the Shannon capacity of a graph. This shows that computing the exact information extraction capacity would be hard in general and we only strive to obtain bounds on this capacity. Nevertheless, there are few classes of utilities for which the capacity is exactly characterized and we shall discuss them in the forthcoming sections.  We will also discuss cases where the capacities are equal even when the two graphs are not equal.

 \subsection{Lower Bounds on the Information Extraction Capacity} 
\label{sec:lower_bound}
 
In this section, we present a lower bound on the information extraction capacity in terms of the optimal value of an optimization problem. We also discuss the characteristics of the feasible region. We derive improvements on the lower bound via a generalization of the optimization problem. Further, we also show that the lower bound based on this generalization is asymptotically tight. 

We first define the optimization problem. Let $\Yscr \subseteq \Xscr$. Let $\Qscr = \{Q^{(0)},\hdots,Q^{(|\Qscr|)}\}$ be the set of all $|\Yscr| \times |\Yscr|$ permutation matrices and  $Q^{(0)} = \mathbf{I}$ be the identity matrix. For convenience, we assume that the permutation matrices are indexed by the symbols from $ \Yscr $. Consider the problem $\Oscr(\ut)$ as 
  \begin{align}
\maxproblemsmallsingcol{$\Oscr(\ut) :$}
{\Yscr \subseteq \Xscr}
{\displaystyle  |\Yscr| }
{\begin{array}{r@{\ }c@{\ }l}
   \sum_{i,j \in \Yscr} Q(i,j)\ut(i,j)&<& 0 \quad \forall \;   Q \in \Qscr \setminus \{\mathbf{I}\}.
	\end{array}}         \non
\end{align}
Let $ \Gamma(\ut)  = \OPT(\Oscr(\ut)) $.

\begin{theorem} \label{thm:cap-lower-bound}
  For a sender with utility $ \ut $,
  \begin{align}
    \capacity \geq \Gamma(\ut) \geq  |\Yscr| , \non
  \end{align}
for any $ \Yscr \subseteq\Xscr $ that is feasible for $ \Oscr(\ut).$
\end{theorem}
\begin{proof}
  See Appendix~\ref{appen:thm5.2}.
\end{proof}
Notice that the problem $ \Oscr(\ut) $ is independent of the blocklength $ n $  and thus the lower bound $ \Gamma(\ut) $ is a single-letter condition. Computing $ \capacity $ requires the sequence of $ \{\alpha(\Gs^n)\}_{n \geq 1} $ corresponding to the sequence of sender graphs $ \{\Gs^n\}_{n \geq 1} $ induced by $ \ut $. Moreover, computation of the independence number of a generic graph can be intractable. We also show via the proof of the above theorem that for a fixed $ \Yscr $, the problem $ \Oscr(\ut) $ can be solved as a linear program. Thus, in this light, the above theorem provides a computationally efficient lower bound. 

We now provide a characterization of the sets that are feasible for the problem $ \Oscr(\ut) $. 

\begin{proposition}[Characterization of feasible region of $ \Oscr(\ut) $]\label{prop:i1-i2-condi-for-Y}
  A set $ \Yscr \subseteq \Xscr $ is feasible for $ \Oscr(\ut) $ if and only if  any sequence of distinct symbols $ i_0,i_1 ,\hdots, i_{K-1} \in \Yscr $,
  \begin{align}
    \ut(i_1,i_0) + \ut(i_2,i_1) + \hdots + \ut(i_0,i_{K-1}) < 0. \label{eq:neg-weight-chain}
  \end{align}
\end{proposition}
\begin{proof}
  See Appendix~\ref{appen:prop5.5}
\end{proof}
In the above proposition, the expression in \eqref{eq:neg-weight-chain} 
 is (upon scaling) the utility obtained by the sender when the observed sequence is  $ (i_0i_1\hdots i_{K-1}) $ and the decoded sequence is $ (i_1i_2\hdots i_{K-1}i_0) $, \ie, the receiver decodes the symbol $ i_1 $ in place of $ i_0 $, and $i_2  $ in place of $ i_! $ and so on for $ i_2 ,\hdots, i_{K-1} $. Thus these symbols form a \textit{chain of lies} created by the sender represented as 
$i_0 \rarr i_1 \rarr \hdots \rarr i_{K-1} \rarr i_0.$
   The above theorem shows that a set $ \Yscr $ is feasible for $ \Oscr(\ut) $ if and only if the utility obtained is negative for all possible chains of lies that can be formed from the symbols of the set $ \Yscr $. We also discuss a computational approach to check for the feasibility of a set in Appendix~\ref{appen:comp_approach}.

An immediate corollary is as follows. 
\begin{corollary}\label{corr:ind-set-feas-for-O}
  If $ \Yscr $ is an independent set in $ \Gs $, then $ \Yscr $ is feasible for $ \Oscr(\ut) $. Hence $ \Gamma(\ut) \geq \alpha(\Gs) $.
\end{corollary}
This bound implies that if $ \Gs $ is not a complete graph, then $ \capacity>1 $. $ \Gs $ is complete if the sender is a \textit{pathological liar}: for every pair of symbols $ i,j \in \Xscr $, the sender either prefers $ i $ to be recovered as $ j $ or $ j $ to be recovered as $ i. $ Notice that there is a slight opening here too for the receiver to exploit -- the sender may prefer $ i $ to be recovered as $ j $, but not $ j $ to be recovered as $ i $. In this case, although $ \Gs $ is complete and $ \alpha(\Gs)=1 $, we find that one could still have $ \Gamma(\ut) >1$, whereby there is a nontrivial information extraction capacity to such a sender. In fact, we present a utility in Example \ref{eg:complete_graph_full_cap}  where  $\capacity = \Gamma(\ut)=q  $.  In the latter sections, we shall see that in general $ \Gamma(\ut) $ is greater than $ \alpha(\Gs) $. 

We now present a theorem that gives a lower bound on $ \capacity $. For that we require a few definitions.   
\begin{definition}[Cycle and Positive-edges cycle]\label{defn:pos-wt-cycle}
 Consider a set of distinct vertices $ \{i_0,i_1,\hdots,i_{K-1}\}$ from a graph. The vertices form a $K$-length cycle in the graph if two vertices $ i_l,i_m $ are adjacent whenever $ l = (m+1) {\rm mod} \; K $. Further, the cycle is a positive-edges cycle if for all $ m $,
  \begin{align}
    \ut(i_l,i_m) \geq 0 \quad \mbox{whenever}\;\; l = (m+1) \mbox{mod} \;K. \non %
  \end{align}
\end{definition}

Suppose there exists a set of vertices $ \Yscr $ that do not form such a cycle. The following theorem  gives a sufficient condition for such a set to be feasible for $ \Oscr(\ut) $ and the size of this set to be a lower bound on $ \capacity $.

\begin{theorem} \label{thm:suff-cond-for-pos-LP}
	Consider a sender with utility $\ut$ and let $\Gs$ be the corresponding sender graph. Suppose there exists a set $ \Yscr $ such that there is no positive-edges cycle in the sub-graph induced by $ \Yscr $ and 
	\begin{align}
	&\min_{i,j \in \Yscr: \ut(i,j) < 0} |\ut(i,j)| > (|\Yscr|-1)  \max_{i,j \in \Yscr: \ut(i,j) \geq 0} \ut(i,j). \label{eq:least-pos-gre-max-neg}
	\end{align}
	Then,  $ \capacity \geq |\Yscr| $. 
      \end{theorem}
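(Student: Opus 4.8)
The plan is to show that the hypotheses force $\Yscr$ to be feasible for the optimization problem $\Oscr(\ut)$, after which Theorem~\ref{thm:cap-lower-bound} immediately gives $\capacity \geq \Gamma(\ut) \geq |\Yscr|$. By Proposition~\ref{prop:i1-i2-condi-for-Y}, feasibility of $\Yscr$ is equivalent to the assertion that every sequence of distinct symbols $i_0,i_1,\hdots,i_{K-1} \in \Yscr$ (with $2 \le K \le |\Yscr|$) forms a negative-weight chain, i.e.
\begin{align}
S \;:=\; \ut(i_1,i_0) + \ut(i_2,i_1) + \hdots + \ut(i_0,i_{K-1}) \;<\; 0. \non
\end{align}
So the whole argument reduces to a uniform upper bound on $S$ over all such closed chains.

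First I would fix such a chain $i_0 \rarr i_1 \rarr \hdots \rarr i_{K-1} \rarr i_0$ and observe that, since there is no positive-edges cycle in the subgraph induced by $\Yscr$, at least one of its $K$ terms $\ut(i_{(m+1)\bmod K},i_m)$ is strictly negative: were all of them nonnegative, then by Definition~\ref{defn:util-graph} consecutive symbols would automatically be adjacent in $\Gs$, so $\{i_0,\hdots,i_{K-1}\}$ would be a positive-edges cycle in the sense of Definition~\ref{defn:pos-wt-cycle}, contradicting the hypothesis. Let $N \ge 1$ be the number of negative terms in $S$ and $P = K - N$ the number of nonnegative terms, so $P \le K - 1 \le |\Yscr| - 1$.

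Next I would split $S$ according to the sign of each term. The nonnegative terms contribute at most $P\cdot\max_{i,j\in\Yscr:\ut(i,j)\ge 0}\ut(i,j)$; each negative term equals $-|\ut(i,j)| \le -\min_{i,j\in\Yscr:\ut(i,j)<0}|\ut(i,j)|$, so the negative terms contribute at most $-N\min_{i,j\in\Yscr:\ut(i,j)<0}|\ut(i,j)|$. Using $P \le |\Yscr|-1$, $N \ge 1$, and the fact that the maximum above is nonnegative (take $i=j$, recalling $\ut(i,i)=0$), this yields
\begin{align}
S \;\le\; (|\Yscr|-1)\max_{i,j\in\Yscr:\,\ut(i,j)\ge 0}\ut(i,j) \;-\; \min_{i,j\in\Yscr:\,\ut(i,j)<0}|\ut(i,j)|, \non
\end{align}
which is strictly negative precisely by assumption~\eqref{eq:least-pos-gre-max-neg}. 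Since the chain was arbitrary, Proposition~\ref{prop:i1-i2-condi-for-Y} shows $\Yscr$ is feasible for $\Oscr(\ut)$, and Theorem~\ref{thm:cap-lower-bound} then gives $\capacity \ge \Gamma(\ut) \ge |\Yscr|$.

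The heavy lifting is all in the earlier results (Proposition~\ref{prop:i1-i2-condi-for-Y} and Theorem~\ref{thm:cap-lower-bound}); the only delicate point here — the step I would treat as the main obstacle — is the bookkeeping that converts ``no positive-edges cycle'' into ``every closed chain of distinct symbols contains at least one negative edge,'' hence $P \le |\Yscr|-1$. I would also note the degenerate case in which no pair of distinct symbols in $\Yscr$ has nonnegative utility: then $\Yscr$ is already an independent set in $\Gs$ and feasibility (hence $\capacity \ge |\Yscr|$) follows directly from Corollary~\ref{corr:ind-set-feas-for-O}, so this case need not be handled by the counting argument above.
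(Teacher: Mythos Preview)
Your proposal is correct and follows essentially the same approach as the paper: verify that every closed chain of distinct symbols in $\Yscr$ has negative weight by using the absence of positive-edges cycles to guarantee at least one negative edge, then bound the sum via \eqref{eq:least-pos-gre-max-neg} and invoke Proposition~\ref{prop:i1-i2-condi-for-Y} (and hence Theorem~\ref{thm:cap-lower-bound}). Your version is in fact a bit more carefully written than the paper's---you track $N$ and $P$ explicitly, justify the step $P\cdot\max \le (|\Yscr|-1)\cdot\max$ via $\max\ge 0$, and separately note the degenerate all-negative case---whereas the paper's displayed bound has a missing minus sign in front of the $\min$ term.
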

      \begin{proof}
        See Appendix~\ref{appen:thm7.2}
      \end{proof}

      Theorem~\ref{thm:suff-cond-for-pos-LP} gives a sufficient condition in terms of the negative and positive values of $ \ut(i,j) $ for a set $ \Yscr $ to be feasible for $ \Oscr(\ut) $. 
Using this result, the following example shows that $\capacity = q $ even when the base sender graph $ \Gs $ is a complete graph.
\begin{examp} \label{eg:complete_graph_full_cap}
Let $\ut : \{0,1,2\} \times \{0,1,2\} \rarr \Real $ and consider the  following form of $ \ut $, 
	\begin{align}
	\ut = \round{   \begin{array}{c c c }
		0 & 1 & 1   \\
		-4 &  0 & 1  \\
		-4 &  -4 &  0  \\
                        \end{array}}. \non
	\end{align}
        The graph induced by the utility is a $3$-cycle graph and is given as 
        \begin{center}
          \begin{tikzpicture}
            \draw (-0.3,0) node {$1$};
            \draw (-0.3,1.5) node {$0$};
            \draw (1.8,0) node {$2$};
            \filldraw [black]
            (0,0) circle [radius=1pt]
            (0,1.5) circle [radius=1pt]
            (1.5,0) circle [radius=1pt];
            \draw (0,0) -- (0,1.5) (1.5,0) -- (0,0) (0,1.5) -- (1.5,0) ;
          \end{tikzpicture}
        \end{center}
        This follows since $ 	\ut(0,1), \ut(0,2), \ut(1,2) > 0$. It can be easily observed that there is no positive-edges cycle in the graph. This is because for chain $ i \rarr j \rarr i $,  either $ \ut(i,j) < 0 $ or $ \ut(j,i) < 0 $. Further, for all chains $ i \rarr j \rarr k  \rarr i $,  either $ \ut(j,i) < 0 $ or $ \ut(k,j) < 0 $ or $ \ut(i,k) < 0 $. Also, the largest weight of a chain in $ \Gs $  is $ -4 + 1 + 1 = -2 $.
        Finally, $\min_{i,j : \ut(i,j) < 0} |\ut(i,j)| = 4 > (3-1) \max_{i,j : \ut(i,j) \geq 0}\ut(i,j) = 2$. Thus, the conditions of Theorem~\ref{thm:suff-cond-for-pos-LP} are satisfied and hence $ \Xscr $ is feasible for $ \Oscr(\ut) $ and hence $ \capacity = q =3$. 
\end{examp}

We find this example to be quite surprising -- even though $ \Gs $ is a complete graph, the information extraction capacity is the maximum it can be. The main lesson to be drawn from it is that the \textit{magnitude} of the gains or losses from truth-telling or lying determine the amount of information that can be extracted; the base sender graph $ \Gs $ only considers the sign of these quantities.


We now present some results using the symmetric part of the utility.
\begin{definition}[Symmetric part of a utility]\label{defn:symm-part-util}
For a given utility $ \ut $, let $ \utsym $ be the symmetric part defined as
\begin{align}
  \utsym(i,j) = \frac{1}{2}(\ut(i,j) + \ut(j,i)). \non 
\end{align}
\end{definition}
We denote the sender graph induced by $ \utsym $ as $ \Gsym $.
The following corollary shows that all feasible sets of $ \Oscr(\ut) $ are also independent sets in the symmetric graph $ \Gsym $.
\begin{corollary}\label{coro:Y-is-ind-set}
If a set $ \Yscr $ is feasible for $ \Oscr(\ut) $, then $ \Yscr $ is an independent set in $ \Gsym $ and  $ \Gamma(\ut) \leq \alpha(\Gsym) $. Equality holds when $ \ut$ is symmetric.
\end{corollary}
\begin{proof}
  See Appendix~\ref{appen:coro5.7}.
\end{proof}

We now use a generalized form of $ \Oscr(\ut) $ to derive a series of lower bounds which approach the capacity.
Recall the optimization problem $ \Oscr(\ut) $. Let $\Yscr_n \subseteq \Xscr^n$. Let $\Qscr = \{Q^{(0)},\hdots,Q^{(|\Qscr|)}\}$, $Q^{(0)} = \mathbf{I}$ be the set of all $|\Yscr_n| \times |\Yscr_n|$ permutation matrices. For convenience, we assume that the permutation matrices are indexed by sequences from $ \Yscr_n $. Consider the problem $\Oscr(\ut_n)$ as
  \begin{align}
\maxproblemsmallsingcol{$\Oscr(\ut_n) :$}
{\Yscr_n \subseteq \Xscr^n}
{\displaystyle  |\Yscr_n| }
{\begin{array}{r@{\ }c@{\ }l}
   \sum_{x,z \in \Yscr_n} Q(x,z)\ut_n(x,z) \quad \forall \;   Q \in \Qscr \setminus \{ \mathbf{I}\}.
	\end{array}}         \non
\end{align}
Let $\Gamma(\ut_n) = \OPT(\Oscr(\ut_n)) $.  The following theorem is a generalization of Theorem~\ref{thm:cap-lower-bound}.



   \begin{theorem}[Hierarchy of bounds] \label{thm:limit_of_heirarchy}
     For any utility $ \ut $, $ \capacity \geq \Gamma(\ut_n)^{1/n} $ for all $ n $. Further, 
     \begin{align}
       \lim_{n \rarr \infty} \Gamma(\ut_n)^{1/n} = \capacity. \non
     \end{align}
   \end{theorem}
   \begin{proof}
  See Appendix~\ref{appen:thm5.12}.
\end{proof}
The above theorem suggests a way to approximate the capacity arbitrarily closely by taking higher values of $ n $. In fact, improved bounds on the Shannon capacity  are also obtained similarly by using the fact that $ \alpha(G^{\boxtimes n}) \geq \alpha(G)^n$ \cite{polak2019new}. 
\subsection{Upper Bounds on the Information Extraction Capacity} 
\label{sec:upper_bound}
 
In this section, we derive upper bounds on the information extraction capacity of the sender. We show that the Shannon capacity of the sender graph corresponding to the symmetric part of the utility is an upper bound on the capacity. We also discuss a class of utilities  where the upper bound is given by the Shannon capacity of the sender graph itself.

First, consider the following upper bound for any utility $ \ut $.
   \begin{lemma} \label{lem:symm-util-upper-bnd}
     Consider a utility $ \ut $ and let $ \utsym $ be its symmetric part.  Then,
     \begin{align}
       \capacity \leq \Xi(\utsym). \non
     \end{align}
   \end{lemma}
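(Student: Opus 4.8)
The plan is to compare the sender graphs $\Gs^n$ and $(\Gsym)^n$ directly for each blocklength $n$, show that the latter is a spanning subgraph of the former, and then pass to the limit. The first observation I would record is that the $n$-letter utility associated with $\utsym$ is precisely the symmetric part of $\ut_n$: for any $x,y\in\Xscr^n$,
\begin{align}
  \utsym_n(y,x) = \frac1n\sum_{k=1}^n \utsym(y_k,x_k) = \frac12\big(\ut_n(y,x) + \ut_n(x,y)\big). \non
\end{align}
In particular $\utsym_n(y,x) = \utsym_n(x,y)$, so in the graph $(\Gsym)^n$ a pair $x\neq y$ is adjacent if and only if $\utsym_n(x,y)\geq 0$ (the ``or'' in Definition~\ref{defn:util-graph} collapses by symmetry).

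Next I would show $E((\Gsym)^n)\subseteq E(\Gs^n)$. Suppose $x\sim y$ in $(\Gsym)^n$. Then $\utsym_n(x,y)\geq 0$, i.e.\ $\ut_n(x,y) + \ut_n(y,x)\geq 0$, and hence at least one of $\ut_n(x,y)\geq 0$ or $\ut_n(y,x)\geq 0$ must hold. By Definition~\ref{defn:util-graph} this means $x\sim y$ in $\Gs^n$. Thus $(\Gsym)^n$ has the same vertex set $\Xscr^n$ and a subset of the edges of $\Gs^n$, so every independent set of $\Gs^n$ is an independent set of $(\Gsym)^n$, giving $\alpha(\Gs^n)\leq \alpha((\Gsym)^n)$ for all $n\in\Nbb$.

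Finally I would take $n$-th roots, $\alpha(\Gs^n)^{1/n}\leq \alpha((\Gsym)^n)^{1/n}$, and let $n\to\infty$; both limits exist by Theorem~\ref{thm:exist-stra-cap} applied to $\ut$ and to $\utsym$ respectively, yielding $\capacity\leq \Xi(\utsym)$. There is no real obstacle here beyond bookkeeping; the only point that needs care is the direction of the inequality between independence numbers (more edges means a smaller independence number), and the remark that the symmetry of $\utsym_n$ is what makes the adjacency condition for $(\Gsym)^n$ reduce to a single inequality, which is exactly what lets it be dominated by the disjunctive adjacency condition for $\Gs^n$. This mirrors the subgraph arguments already used in the proofs of Lemma~\ref{lem:umax-lower-bnd} and Lemma~\ref{lem:lower-bnd-with-symm-util}.
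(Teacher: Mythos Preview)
Your proof is correct and follows essentially the same route as the paper: both establish that $(\Gsym)^n$ is a spanning subgraph of $\Gs^n$ by observing that $\utsym_n(x,y)\geq 0$ forces $\ut_n(x,y)\geq 0$ or $\ut_n(y,x)\geq 0$, deduce $\alpha(\Gs^n)\leq\alpha((\Gsym)^n)$, and pass to the limit. Your version is slightly more explicit about the fact that $\utsym_n$ is the symmetric part of $\ut_n$ and about the existence of the limits, but the argument is the same.
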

   \begin{proof}
     See Appendix~\ref{appen:lem6.1}
   \end{proof}
Thus the information extraction capacity of a sender is no greater than that of a sender whose utility is equal to the symmetric part of the former's utility. In other words, ignoring the skew-symmetric part of the utility leads to an increase in the information extraction capacity. Intuitively this is because in the skew-symmetric part of the utility for each pair of symbols $ i,j \in \Xscr $, we have that either $  i$  is  preferred to be recovered as $ j $ or $ j $ is preferred to be recovered as $ i. $ Thus the skew-symmetric part has capacity $ 1 $.

\begin{theorem}\label{thm:upper-bnd-Shannon-cap-symm}
  Consider a sender with a utility $ \ut $ and let $ \utsym $ be its symmetric part. Let $ \Gsym $ be the sender graph corresponding to $ \utsym $. Then,
  \begin{align}
    \capacity \leq \Theta(\Gsym). \non
  \end{align}
\end{theorem}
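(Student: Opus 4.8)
The plan is to obtain the result by simply chaining the two immediately preceding results. First I would invoke Lemma~\ref{lem:symm-util-upper-bnd}, which states that for an arbitrary utility $\ut$ one has $\capacity \leq \Xi(\utsym)$, where $\utsym$ is the symmetric part of $\ut$. This reduces the task to bounding $\Xi(\utsym)$ from above.

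Next I would observe that $\utsym$ is \emph{itself} a symmetric utility: by Definition~\ref{defn:symm-part-util}, $\utsym(i,j) = \tfrac12(\ut(i,j)+\ut(j,i)) = \utsym(j,i)$, and moreover $\utsym(i,i) = \ut(i,i) = 0$, so $\utsym$ satisfies the standing normalization assumption on utilities. Hence Theorem~\ref{thm:symmetric-utility-bnd} applies with $\utsym$ in place of $\ut$; since the sender graph associated with $\utsym$ is exactly $\Gsym$ (this is how $\Gsym$ was defined, just before Corollary~\ref{coro:Y-is-ind-set}), the theorem yields $\Xi(\utsym) \leq \Theta(\Gsym)$. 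Combining the two displayed inequalities gives $\capacity \leq \Xi(\utsym) \leq \Theta(\Gsym)$, which is the claim.

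There is essentially no obstacle here beyond a bookkeeping check: one must be sure that Theorem~\ref{thm:symmetric-utility-bnd}, stated for ``a symmetric utility,'' is legitimately instantiated at $\utsym$, and that the graph appearing in its conclusion is the object we have named $\Gsym$. Both points are routine once one unwinds the definitions, so the proof is a two-line argument and no new estimate or construction is needed.
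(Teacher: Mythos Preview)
Your proposal is correct and matches the paper's own proof, which simply cites Lemma~\ref{lem:symm-util-upper-bnd} and Theorem~\ref{thm:symmetric-utility-bnd} in combination. The additional verification you provide (that $\utsym$ is itself symmetric and normalized, so Theorem~\ref{thm:symmetric-utility-bnd} legitimately applies) is precisely the bookkeeping the paper leaves implicit.
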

  \begin{proof}
     See Appendix~\ref{appen:thm6.2}
   \end{proof}

As with the lower bound, the above theorem provides an upper bound that is independent of the blocklength $ n $ and depends only on the single letter utility $ \ut $. We can observe that the symmetric part of the utility plays a recurring role in the characterization of the bounds on the capacity. Recall that in Corollary~\ref{coro:Y-is-ind-set}, we derived a characterization of the feasible sets of the problem $ \Oscr(\ut) $ in terms of the symmetric sender graph and showed that $ \Gamma(\ut) \leq \alpha(\Gsym) $. 

We now present another class of utilities where the capacity is bounded above by the Shannon capacity of the sender graph.
  \begin{theorem}\label{thm:posi-diff-greater}
Consider a sender with utility  $\ut$ given as 
    \begin{align}
      \ut(i,j) = \left\{\begin{array}{c l}
                            a  & \mbox{if} \;\; \ut(i,j) \geq 0 \\
                            -b  & \mbox{if} \;\; \ut(i,j) < 0
      \end{array}\right., \non
    \end{align}
    where $a,b > 0$ and $a \geq b$. Let $ \Gs $ be the corresponding sender graph.  Then,
    \begin{align}
\capacity \leq \Theta(\Gs). \non
    \end{align}
  \end{theorem}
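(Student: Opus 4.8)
The plan is to follow the template of the proof of Theorem~\ref{thm:symmetric-utility-bnd}: show that for every $n \in \Nbb$ the strong product graph $\Gs^{\boxtimes n}$ is a subgraph of the sender graph $\Gs^n$ on the common vertex set $\Xscr^n$. Edge addition can only decrease the independence number, so this inclusion gives $\alpha(\Gs^n) \le \alpha(\Gs^{\boxtimes n})$ for all $n$; taking $n$-th roots and passing to the limit (the first limit exists by Theorem~\ref{thm:exist-stra-cap}, the second by Definition~\ref{defn:shannon-cap}) then yields $\capacity \le \Theta(\Gs)$.

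The only nontrivial step is the subgraph inclusion, and this is exactly where the hypothesis $a \ge b$ is used. Note first that, since $\ut$ takes only the values $a$ and $-b$ off the diagonal, $\ut(i,j) \ge 0$ is equivalent to $\ut(i,j) = a$. Fix distinct $x,y \in \Xscr^n$ that are adjacent in $\Gs^{\boxtimes n}$. By Definition~\ref{defn:strong-pdt}, for every coordinate $k$ either $x_k = y_k$ or $x_k \sim y_k$ in $\Gs$. If $x_k \sim y_k$ in $\Gs$ then at least one of $\ut(x_k,y_k), \ut(y_k,x_k)$ equals $a$ while the other is $a$ or $-b$, so $\ut(x_k,y_k) + \ut(y_k,x_k) \ge a - b \ge 0$; and the contribution of the coordinates with $x_k = y_k$ to this sum is nonnegative. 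Summing over $k$ gives $\ut_n(y,x) + \ut_n(x,y) \ge 0$, whence $\ut_n(y,x) \ge 0$ or $\ut_n(x,y) \ge 0$, i.e. $x \sim y$ in $\Gs^n$. Since $x,y$ were arbitrary, $\Gs^{\boxtimes n}$ is a subgraph of $\Gs^n$, and the conclusion follows as above.

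The step I expect to be the crux is the coordinatewise bound $\ut(x_k,y_k)+\ut(y_k,x_k) \ge a-b$ together with $a \ge b$: without $a \ge b$ a single coordinate could contribute $a$ to $\ut_n(x,y)$ and $-b$ to $\ut_n(y,x)$, so that after summing both directional utilities could end up negative and the argument would break; the assumption $a \ge b$ is precisely what rules this out. Everything else — the fact that diagonal coordinates contribute nonnegatively, monotonicity of $\alpha(\cdot)$ under adding edges, and the limit manipulation — is routine.
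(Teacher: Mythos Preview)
Your proof is correct and reaches the same intermediate conclusion as the paper --- that $\Gs^{\boxtimes n}$ is a subgraph of $\Gs^n$ for every $n$ --- but by a genuinely different argument. The paper proves the contrapositive: if $x,y$ are \emph{not} adjacent in $\Gs^n$, then both $\ut_n(y,x)<0$ and $\ut_n(x,y)<0$, and a pigeonhole count (using $a\ge b$) forces more than half of the coordinates with $x_k\neq y_k$ to have $\ut(y_k,x_k)=-b$, and likewise in the other direction, so some coordinate has $\ut(y_k,x_k)=\ut(x_k,y_k)=-b$; that coordinate is non-adjacent in $\Gs$, whence $x,y$ are non-adjacent in $\Gs^{\boxtimes n}$. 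You instead work directly and bound the \emph{symmetric} quantity $\ut_n(x,y)+\ut_n(y,x)$ coordinatewise, getting $\ge a-b\ge 0$ at every coordinate where $x_k\sim y_k$ and $0$ elsewhere. Your route is slightly more economical: it avoids the counting step entirely and dovetails with the symmetric-part machinery already in the paper (Lemma~\ref{lem:symm-util-upper-bnd}, Theorem~\ref{thm:symmetric-utility-bnd}). The paper's argument, on the other hand, actually produces a witness coordinate $k$ with $x_k\not\sim y_k$, which is slightly more information than you extract; but for the theorem as stated this extra information is not needed.
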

  \begin{proof}
     See Appendix~\ref{appen:thm6.4}
   \end{proof}
The utility in the above theorem is such that the \textit{incentive} for lying is greater than the \textit{penalty} for lying. Intuitively, a sender with utility as above has higher tendency to lie about its information, since it can offset its penalty of lying by gaining appropriate incentive. Thus the information extraction capacity for such a sender is in general strictly less than $ q. $
  
  The above theorem characterized the utilities for which the capacity is bounded above by $ \Theta(\Gs) $. In the following section, we will discuss cases where the capacity is exactly equal to $ \Theta(\Gs) $.

\subsection{Exact evaluation of  $ \capacity $}

In the earlier section, we discussed  upper bounds  on the information extraction capacity. A natural question that follows is that under what conditions the capacity is exactly characterized? In this section, we mention few cases of utility where the information extraction capacity is equal to the Shannon capacity of the induced sender graph.

\begin{theorem}\label{thm:exact-charac-of-capa}
  Consider a utility $ \ut $ and let $ \Gs $ be the corresponding sender graph. Then,
       \begin{align}
\capacity = \Theta(\Gs)  = \alpha(\Gs) \non
     \end{align}
if any of the following hold:
  \begin{enumerate}
     \item $ \ut $ is symmetric and $ \Gs $ is a perfect graph.
       
\item	$ \ut $ is of the form given in  Theorem~\ref{thm:posi-diff-greater} and  $ \Gs $ is a perfect graph.
\end{enumerate}
\end{theorem}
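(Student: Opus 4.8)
The plan is to sandwich $\capacity$ between $\alpha(\Gs)$ and $\Theta(\Gs)$ and then invoke the classical fact that these two quantities coincide when $\Gs$ is perfect. First I would record the lower bound $\alpha(\Gs)\le\capacity$, which is exactly Proposition~\ref{prop:trivial-cap-bnds}. For the matching upper bound $\capacity\le\Theta(\Gs)$ I would split according to the two hypotheses: under~(1), with $\ut$ symmetric, this is Theorem~\ref{thm:symmetric-utility-bnd}; under~(2), with $\ut$ of the two-valued form in which the incentive $a$ dominates the penalty $b$, this is Theorem~\ref{thm:posi-diff-greater}. Hence in either case $\alpha(\Gs)\le\capacity\le\Theta(\Gs)$.

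The remaining ingredient is that $\Theta(G)=\alpha(G)$ for every perfect graph $G$. I would derive this from Shannon's upper bound $\Theta(G)\le\chi(\overline{G})$, where $\chi(\overline{G})$ is the minimum number of cliques needed to cover $V(G)$: covering $V(G)$ by $k$ cliques yields a covering of $V(G^{\boxtimes n})$ by $k^n$ cliques (a product of cliques is a clique in the strong product), and any independent set meets each clique at most once, so $\alpha(G^{\boxtimes n})\le k^n$ and thus $\Theta(G)\le k$; this is the argument of~\cite{shannon1956zero}. By the perfect graph theorem $\overline{\Gs}$ is perfect whenever $\Gs$ is, so $\chi(\overline{\Gs})=\omega(\overline{\Gs})=\alpha(\Gs)$, the last equality because a clique in $\overline{\Gs}$ is an independent set in $\Gs$. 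Together with the trivial inequality $\alpha(G)\le\Theta(G)$ this gives $\alpha(\Gs)\le\Theta(\Gs)\le\chi(\overline{\Gs})=\alpha(\Gs)$, i.e.\ $\Theta(\Gs)=\alpha(\Gs)$.

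Combining the two chains of inequalities yields $\alpha(\Gs)\le\capacity\le\Theta(\Gs)=\alpha(\Gs)$, which forces $\capacity=\Theta(\Gs)=\alpha(\Gs)$ and proves the theorem.

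I do not expect a substantive obstacle here: the statement is essentially a corollary of the lower and upper bounds established in Sections~IV--VI together with a standard property of perfect graphs. The only point requiring mild care is to use the clique-cover form of Shannon's bound (rather than, say, the Lov\'{a}sz theta or fractional chromatic bound) and to recall that perfection passes from $\Gs$ to its complement $\overline{\Gs}$, which is exactly what makes the clique cover number collapse to $\alpha(\Gs)$.
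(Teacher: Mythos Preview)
Your proposal is correct and follows essentially the same approach as the paper: the paper's proof simply invokes Theorem~\ref{thm:symmetric-utility-bnd} (for case~(1)) or Theorem~\ref{thm:posi-diff-greater} (for case~(2)) to get $\capacity\le\Theta(\Gs)$, combines this with the lower bound $\alpha(\Gs)\le\capacity$, and then cites~\cite{lovasz1979shannon} for the fact that $\Theta(\Gs)=\alpha(\Gs)$ when $\Gs$ is perfect. Your version is the same, except that you spell out the perfect-graph step via Shannon's clique-cover bound and the weak perfect graph theorem rather than citing it.
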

\begin{proof}
     See Appendix~\ref{appen:thm6.5}
   \end{proof}

The following example demonstrates how the lower bounds and the upper bounds can be used to exactly compute the capacity. 
\begin{examp}\label{eg:u-usym-cap-equal}
        Consider a $ \ut $ on $ \Xscr = \{ 0,1,2,3,4\} $ as
  \begin{align}
    \ut(i,j) = \left(
    \begin{array}{ c c c c c }
      0  &  -1 & -1 & -1 & -1  \\
       1 &  0 & -1  & -1 & -1  \\
      -1 & 1  & 0  & -1  & -1  \\
      -1 & -1 &  1 & 0   & -1   \\
      1 & -1 & -1 & 1    &  0
    \end{array}\right). \non
  \end{align}
  It can be observed that the base graph $ \Gs $ is a pentagon and is given as     \begin{center}
    \begin{tikzpicture}
      \filldraw [white]
      (0,0.6) circle [radius=1pt];
      \draw (0,2.3) node {$0$};
      \draw (-1.3,1) node {$1$};
      \draw (1.3,1) node {$4$};
      \draw (-1,-0.38) node {$2$};
      \draw (1,-0.38) node {$3$};
      \filldraw [black]
      (0, 2) circle [radius=1pt]
      (1, 1) circle [radius=1pt]
      (-1, 1) circle [radius=1pt]
      (0.707, -0.38) circle [radius=1pt]
      (-0.707, -0.38) circle [radius=1pt];
      \draw  (0, 2) -- (1,1) (0, 2) -- (-1,1)   (-1,1) -- (-0.707,-0.38)    (1, 1) -- (0.707,-0.38) (-0.707,-0.38) --  (0.707,-0.38);
    \end{tikzpicture}
  \end{center}
 This is because  $ i,j $ are adjacent if and only if $ |j-i| \mbox{mod}\; 5 = 1 $.  We show that $ \capacity = \sqrt{5} $.

 Notice that the graph does not  contain a positive-edges cycle since $ \ut(0,4) = -1 $. Further, the graph $ \Gsym $ induced by the symmetric part $ \utsym $ is also a pentagon graph since $ \utsym(i,j) = (\ut(i,j)+\ut(j,i))/2 = 0 $ if and only if  $ i, j $ are adjacent in $ \Gs $ or $ i = j $. For all other $ i,j $ not adjacent in $ \Gsym $, $ \utsym(i,j) < 0 $. Thus, from Theorem~\ref{thm:generalization-shannon-cap}, it follows that  $ \Xi(\utsym) = \Theta(\Gsym) = \sqrt{5} $. 

  We now compute $ \capacity $. From the problem  $ \Oscr(\ut) $, we only get a lower bound on $ \capacity $  since $ \Gamma(\ut) = 2 $. However, consider $ \Oscr(\ut_2) $ and the set $ \{00,12,24,31,43\} $. Let $ x,y $ be a distinct pair of sequences from $ \{00,12,24,31,43\} $. It follows that either $ \ut(x_k,y_k) = -1 $ or  $ \ut(y_k,x_k) = -1 $ and hence, $ \ut_2(x,y) \leq 0 $. Furthermore, $ \ut_2(x,y) = 0 $ if and only if $ x_1 = (y_1 + 1) \mbox{mod} \;5 $ and $ x_2 = (y_2 + 2) \mbox{mod} \;5  $. From this it is easy to observe that all sets of distinct sequences from the set $ \{00,12,24,31,43\} $  form an inequality as \eqref{eq:neg-weight-chain} and using a result analogous to Proposition~\ref{prop:i1-i2-condi-for-Y} for $2$-length sequences, we get that $ \{00,12,24,31,43\} $ is feasible for $ \Oscr(\ut_2) $  and hence
  \begin{align}
\capacity \geq \Gamma(\ut_2)^{1/2} = \sqrt{5}. \non
  \end{align}
The upper and lower bounds together give that $ \capacity = \sqrt{5} $.
\end{examp}

 \section{Information Extraction over a Noisy Channel}
 \label{sec:equil_noisy}
 
 We now discuss the case where the sender and receiver communicate via a noisy channel. We determine the maximum number of sequences that can be recovered by the receiver in any equilibrium of the game. We present a notion of the asymptotic rate of information extraction and we show that it is equal to the minimum of the information extraction capacity of the sender and the zero-error capacity of the noisy channel.

 \subsection{Model with a noisy channel}


Consider now a setting where the sender and receiver communicate via a noisy channel.  As earlier, the sender observes a sequence $X \in \Xscr^n$ and encodes it as $s_n(X) = Y$, where $s_n : \Xscr^n \rarr \Xscr^n$. However, the message is now transmitted to the receiver via a discrete memoryless channel which generates an output $Z \in \Xscr^n$ according to the distribution $P_{Z|Y}$ defined as
\begin{align}
  P_{Z|Y}(z|y) = \prod_{i=1}^nP_{\Zbb|\Ybb}(z_i|y_i), \label{eq:dmc-defn}
\end{align}
where $P_{\Zbb|\Ybb}(\cdot|\cdot) \in \Pscr(\Xscr|\Xscr)$. The  output  is decoded by the receiver  as $g_n(Z) = \wi{X}$, where $g_n : \Xscr^n \rarr \Xscr^n \cup \{\Delta\}$. Here $Z$ is distributed according to $P_{Z|Y}(\cdot|s_n(x))$, when $s_n(x)$ is the input to the channel.

Generalizing our earlier notation, let 
\begin{align} 
\Dscr(g_n,s_n) := \curly{x \in \Xscr^n \;|\; \Pbb(\wi{X} = x | X  = x ) = 1}, \label{eq:decod-set-noisy}
\end{align}
be the set of   recovered sequences when the receiver plays the strategy $g_n$ and the sender plays the strategy $s_n$. The receiver tries to maximize the size of this set by choosing a strategy $g_n$. 
The sender on the other hand chooses a strategy $s_n$ to maximize the expected utility 
\begin{align}
  \expec{\ut_n(\wi{X},x)} = \sum_{z \in \Xscr^n}   P_{Z|Y}(z|s_n(x))\ut_n(g(z),x) \non
\end{align}
for every $x \in \Xscr^n$. The utility $\ut_n$ is as given in \eqref{eq:avg-util-defn}.

As in the noiseless model, we pose the problem as a Stackelberg game.
\begin{definition}[Stackelberg equilibrium]\label{defn:stack-equi-noisy}
The optimal strategy of the receiver is given as
\begin{align}
  g_n^* \in \argmax_{g_n} \min_{s_n \in \best(g_n)}|\Dscr(g_n,s_n)|. \label{eq:rec-opt-stra-game-noisy}
\end{align}
 The set of best responses of the sender $\best(g_n)$ is determined as
\begin{align}
  \best(g_n) = \Big\{ &s_n : \Xscr^n \rarr \Xscr^n \;|\;
  \expec{\ut_n(g_n(Z),x)} \geq \expec{\ut_n(g_n(Z'),x)} 
   \quad \forall \; x \in \Xscr^n, \forall \; s_n' \vphantom{\expec{\ut_n(g_n(Z),x)}}\Big\}, \label{eq:sen-opt-stra-game-noisy}
\end{align}
where $Z$ is distributed according to $P_{Z|Y}(\cdot|s_n(x))$ and $Z'$ is distributed according to $P_{Z|Y}(\cdot|s_n'(x))$.  
\end{definition}
Notice that as in the noiseless channel model, we adopt a \textit{pessimistic} formulation for the receiver.

Thus, the set of best responses of the sender for a strategy $g_n$ of the receiver is a collection of strategies, $s_n$, such that for all sequences  $x$, the expected utility with respect to the distribution $P_{Z|Y}(\cdot|s_n(x))$ is the highest that can be obtained by the sender.

\subsection{Stackelberg equilibrium of the game}


Recall the definition of the recovered set $\Dscr(g_n,s_n)$ from \eqref{eq:decod-set-noisy}, when the receiver plays $g_n$ and the sender plays $s_n$. As in Definition~\ref{defn:rate-noiseless}, the number of sequences recovered by a strategy $ g_n $ of the receiver is defined as 
\[ \min_{s_n \in \best(g_n)} |\Dscr(g_n,s_n)|, \]
where $ \best(g_n) $ is as given in \eqref{eq:sen-opt-stra-game-noisy}.

We now recall the definition of a graph induced by the channel, called as the confusability graph.
\begin{definition} [Confusability graph] \label{defn:confus-graph}
The confusability graph of a channel $ P_{Z|Y} $, denoted as $\Gc^n = (\Xscr^n,E_{\mathsf{c}})$, is the graph where $(y,y') \in E_{\mathsf{c}}$ if there exists an output $z \in \Xscr^n$ such that
  \begin{align}
    P_{Z|Y}(z|y)P_{Z|Y}(z|y') > 0. \non
  \end{align}
  For $n = 1$, the graph $\Gc^1$ is denoted as $\Gc$.
\end{definition}
Thus, two inputs to the channel are adjacent in the confusability graph if they have a common output. The confusability graph was first introduced by Shannon in \cite{shannon1956zero}.

In the following theorem, we show that the receiver recovers $ \min\curly{\alpha(\Gs^n),\alpha(\Gc^n)} $ number of sequences in an equilibrium. The idea of the proof is as follows. We consider the independent sets in $\Gs^n$ and $\Gc^n$ of the size $\min\{ \alpha(\Gs^n),\alpha(\Gc^n)\}$ denoted as $I_{\mathsf{s}}^n$ and $I_{\mathsf{c}}^n$ respectively.  For every channel input sequence in $I_{\mathsf{c}}^n$, the receiver maps the corresponding output set to a unique sequence in $I_{\mathsf{s}}^n$. This establishes a one-to-one correspondence between the sequences in $I_{\mathsf{s}}^n$ and $I_{\mathsf{c}}^n$. Since $I_{\mathsf{s}}^n$ is an independent set in $\Gs^n$, the sender complies with the receiver and maps the sequences to their respective input sequences in $I_{\mathsf{c}}^n$. The receiver is thus able to recover the set $I_{\mathsf{s}}^n$ of size $\min\{ \alpha(\Gs^n),\alpha(\Gc^n)\}$. 

\begin{theorem} \label{thm:stac-eq-recov-ind-no}
Let $n \in \Nbb$. Consider a sender with utility $\ut$ and let $\Gs^n$ be the corresponding sender graph. Let $\Gc^n$ be the confusability graph of the channel $P_{Z|Y}$. For all Stackelberg equilibrium strategies $g_n^*$ of the receiver,
  \begin{align}
  \min_{s_n \in \best(g_n^*)}|\Dscr(g_n^*,s_n)| = \min\{ \alpha(\Gs^n),\alpha(\Gc^n)\}. \non  
  \end{align}
\end{theorem}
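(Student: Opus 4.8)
The plan is to prove the identity by a matching converse and achievability: the converse is shown for \emph{every} receiver strategy $g_n$ (hence a fortiori for equilibrium strategies), and achievability exhibits a single strategy attaining $\min\{\alpha(\Gs^n),\alpha(\Gc^n)\}$; since the equilibrium $g_n^*$ maximises $\min_{s_n\in\best(g_n)}|\Dscr(g_n,s_n)|$, the two bounds pin down its value. A best response exists throughout because the strategy space is finite and the best-response condition is pointwise in $x$.

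\emph{Converse.} First I would show that for any $g_n$ and any $s_n\in\best(g_n)$ the set $\Dscr(g_n,s_n)$ is an independent set in $\Gc^n$: if $x,\bar x\in\Dscr(g_n,s_n)$ are distinct, then $x\in\Dscr(g_n,s_n)$ forces $g_n(z)=x$ for every $z\in\Zscr(s_n(x))$ and likewise $g_n(z)=\bar x$ on $\Zscr(s_n(\bar x))$; consistency then requires $\Zscr(s_n(x))\cap\Zscr(s_n(\bar x))=\emptyset$, i.e.\ $s_n(x)\not\sim s_n(\bar x)$ in $\Gc^n$, and in particular $s_n$ is injective on $\Dscr(g_n,s_n)$, so its image is an independent set in $\Gc^n$ of the same cardinality and $|\Dscr(g_n,s_n)|\le\alpha(\Gc^n)$. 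Next, mimicking the proof of Lemma~\ref{lem:dec-set-ind-set}, I would show that for $s_n\in\pbest(g_n)$ the set $\Dscr(g_n,s_n)$ is independent in $\Gs^n$: otherwise there are distinct $\bar x,\wi x\in\Dscr(g_n,s_n)$ with (WLOG) $\ut_n(\wi x,\bar x)\ge 0=\ut_n(\bar x,\bar x)$; redefining $s_n$ at $\bar x$ to be $s_n(\wi x)$ gives another best response (the value at $\bar x$ becomes $\ut_n(\wi x,\bar x)\ge\ut_n(\bar x,\bar x)$ since $\wi x\in\Dscr(g_n,s_n)$, values elsewhere unchanged) whose decoding set loses $\bar x$ and keeps everything else, contradicting $s_n\in\pbest(g_n)$. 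Hence $\min_{s_n\in\best(g_n)}|\Dscr(g_n,s_n)|\le\min\{\alpha(\Gs^n),\alpha(\Gc^n)\}$ for every $g_n$.

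\emph{Achievability.} Put $N=\min\{\alpha(\Gs^n),\alpha(\Gc^n)\}$, fix an independent set $\{a_1,\dots,a_N\}$ of $\Gs^n$ and an independent set $\{b_1,\dots,b_N\}$ of $\Gc^n$; by the definition of $\Gc^n$ the supports $\Zscr(b_1),\dots,\Zscr(b_N)$ are pairwise disjoint. Define $g_n(z)=a_k$ for $z\in\Zscr(b_k)$ and $g_n(z)=\Delta$ otherwise. For any $x$, transmitting $b_k$ with $k\in\arg\max_j\ut_n(a_j,x)$ gives expected utility $\max_j\ut_n(a_j,x)$, while any channel input $y$ gives $\sum_{z\in\Zscr(y)}P_{Z|Y}(z|y)\,\ut_n(g_n(z),x)\le\max_j\ut_n(a_j,x)$ (the summands are among the $\ut_n(a_j,x)$ or equal $-\infty$), so the best-response value equals $\max_j\ut_n(a_j,x)$ and any best response must satisfy $g_n(z)\in\arg\max_j\ut_n(a_j,x)$ for all $z\in\Zscr(s_n(x))$. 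For $x=a_m$, independence of $\{a_j\}$ in $\Gs^n$ gives $\ut_n(a_j,a_m)<0$ for $j\ne m$, so $\arg\max_j\ut_n(a_j,a_m)=\{a_m\}$; thus every best response has $g_n(z)=a_m$ throughout $\Zscr(s_n(a_m))$, i.e.\ $\Pbb(\wi X=a_m\mid X=a_m)=1$ and $a_m\in\Dscr(g_n,s_n)$. Hence $\{a_1,\dots,a_N\}\subseteq\Dscr(g_n,s_n)$ for all $s_n\in\best(g_n)$, so $\min_{s_n\in\best(g_n)}|\Dscr(g_n,s_n)|\ge N$, and combining with the converse gives the claimed equality at $g_n^*$.

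\emph{Main obstacle.} The delicate part is achievability: because the channel is noisy the sender could respond with an input whose output distribution straddles several blocks $\Zscr(b_k)$, producing a randomised decoder output, so the argument must verify that under the block-wise decoder $g_n$ such hedging is suboptimal, and in particular that on each target sequence $a_m$ the unique optimal response forces deterministic recovery; handling the pessimistic minimum over possibly many best responses, together with the $-\infty$ value of $\Delta$, is where care is required, while the rest parallels the noiseless analysis.
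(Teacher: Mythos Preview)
Your proposal is correct and follows essentially the same approach as the paper: both prove the upper bound by showing $\Dscr(g_n,s_n)$ must be independent in both $\Gs^n$ (via the deviation argument of Lemma~\ref{lem:dec-set-ind-set}) and in $\Gc^n$, and both prove achievability with the identical block-wise decoder mapping $\Zscr(b_k)\mapsto a_k$ and $\Delta$ elsewhere. Your converse is slightly more explicit than the paper's (you spell out the $\Gc^n$ independence argument and redo the $\Gs^n$ argument in the noisy setting rather than citing the noiseless Lemma~\ref{lem:dec-set-ind-set}), and your achievability phrasing via $\max_j\ut_n(a_j,x)$ is marginally cleaner than the paper's version, but the substance is the same.
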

\begin{proof}
  See Appendix~\ref{appen:thm8.1}
\end{proof}
\subsection{Asymptotic rate of information extraction}

\begin{definition}[\textit{Asymptotic rate of information extraction}] \label{defn:max-rate}
	Let $\{g_n^*\}_{n \geq 1}$ be a sequence of Stackelberg equilibrium strategies for the receiver and let $ \{R(g_n^*)\}_{n \geq 1} $ be the corresponding sequence of rate of information extraction. The asymptotic rate of information extraction, denoted by $\Rscr$, is given as 
	\begin{align}
		\Rscr = \lim \sup_{n} R(g_n^*). \non
	\end{align}
\end{definition}
When the channel is noiseless, this quantity is equal to the information extraction capacity of the sender.  
We now use the results derived in the last section to determine the asymptotic rate of information extraction $\Rscr$ for the noisy channel case. 
\begin{theorem} \label{thm:limit-D-min-of-capa}
Consider a sender with utility $\ut$ and let $\Gs$ be the corresponding sender graph. Let $\Gc$ be the confusability graph of the channel $P_{Z|Y}$. Then the asymptotic rate of information extraction is given as 
  \begin{align}
    \Rscr = \min\{\capacity, \Theta(\Gc)\}. \non
  \end{align}
\end{theorem}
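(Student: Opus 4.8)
The plan is to combine Theorem~\ref{thm:stac-eq-recov-ind-no}, which gives $R(g_n^*) = \min\{\alpha(\Gs^n),\alpha(\Gc^n)\}^{1/n}$ for every Stackelberg equilibrium strategy $g_n^*$, with the known multiplicativity/limit facts for the two independence numbers. First I would observe that, by Theorem~\ref{thm:stac-eq-recov-ind-no}, for \emph{every} equilibrium strategy $g_n^*$ we have $R(g_n^*) = \big(\min\{\alpha(\Gs^n),\alpha(\Gc^n)\}\big)^{1/n}$, so the sequence $\{R(g_n^*)\}_{n\ge 1}$ does not actually depend on the particular choice of equilibrium strategies; hence $\Rscr = \limsup_n \big(\min\{\alpha(\Gs^n),\alpha(\Gc^n)\}\big)^{1/n}$. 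The goal is therefore to show this limsup equals $\min\{\capacity,\Theta(\Gc)\}$, where $\capacity = \lim_n \alpha(\Gs^n)^{1/n}$ exists by Theorem~\ref{thm:exist-stra-cap} and $\Theta(\Gc) = \lim_n \alpha(\Gc^n)^{1/n} = \lim_n \alpha(\Gc^{\boxtimes n})^{1/n}$ (the confusability graph satisfies $\Gc^n = \Gc^{\boxtimes n}$, so this is the usual Shannon capacity and the limit exists by super-multiplicativity of $\alpha(\Gc^{\boxtimes n})$).

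The key step is to pass the limit through the $\min$. Since $\min\{a_n,b_n\}^{1/n} \le \min\{a_n^{1/n},b_n^{1/n}\}$ and both $a_n^{1/n}\to\capacity$, $b_n^{1/n}\to\Theta(\Gc)$, we immediately get $\limsup_n \min\{\alpha(\Gs^n),\alpha(\Gc^n)\}^{1/n} \le \min\{\capacity,\Theta(\Gc)\}$. For the reverse inequality I would argue: fix $\delta>0$; by existence of both limits (as suprema of the respective normalized sequences, via Fekete/super-multiplicativity) there is $N$ so that for all $n\ge N$, $\alpha(\Gs^n) \ge \capacity^{n(1-\delta)}$ and $\alpha(\Gc^n)\ge \Theta(\Gc)^{n(1-\delta)}$. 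Hence $\min\{\alpha(\Gs^n),\alpha(\Gc^n)\}^{1/n} \ge \big(\min\{\capacity,\Theta(\Gc)\}\big)^{1-\delta}$ for all $n\ge N$, so $\limsup_n \min\{\alpha(\Gs^n),\alpha(\Gc^n)\}^{1/n} \ge \big(\min\{\capacity,\Theta(\Gc)\}\big)^{1-\delta}$. Letting $\delta\downarrow 0$ gives the matching lower bound, and the two inequalities together yield $\Rscr = \min\{\capacity,\Theta(\Gc)\}$ (in particular the $\limsup$ is an honest limit).

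The main obstacle, such as it is, is being careful about what ``$\lim$ exists'' gives us: for $\Gs^n$ we invoke Theorem~\ref{thm:exist-stra-cap} and the fact (from its proof) that $\capacity = \sup_n \alpha(\Gs^n)^{1/n}$, which directly gives the eventual lower bound $\alpha(\Gs^n)\ge \capacity^{n(1-\delta)}$ used above; for $\Gc^n$ we need the analogous statement for the Shannon capacity, i.e. that $\alpha(\Gc^{\boxtimes n})$ is super-multiplicative so $\Theta(\Gc) = \sup_n \alpha(\Gc^{\boxtimes n})^{1/n}$, which is Shannon's classical observation and may be cited from~\cite{shannon1956zero} (or re-derived exactly as in Lemma~\ref{lem:alphaGmn-geq-GmGn} since the strong product of independent sets is independent). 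Everything else is the elementary two-sided estimate on $\min\{a_n,b_n\}^{1/n}$, so no genuinely hard step remains; the only care needed is to phrase the $\delta$-argument so it applies simultaneously to both sequences.
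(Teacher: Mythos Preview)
Your proposal is correct and follows essentially the same route as the paper: invoke Theorem~\ref{thm:stac-eq-recov-ind-no}, use $\Gc^n=\Gc^{\boxtimes n}$ for a DMC, and pass to the limit. The only difference is that you work harder than necessary: since $t\mapsto t^{1/n}$ is monotone increasing, $\min\{a_n,b_n\}^{1/n}=\min\{a_n^{1/n},b_n^{1/n}\}$ is an \emph{equality}, not just an inequality, so the paper simply writes $R(g_n^*)=\min\{\alpha(\Gs^n)^{1/n},\alpha(\Gc^{\boxtimes n})^{1/n}\}$ and then appeals to continuity of $\min$ together with the existence of both limits, bypassing your $\delta$-argument entirely.
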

\begin{proof}
  See Appendix~\ref{appen:thm8.2}
\end{proof}
The above result states that given a sender with information extraction capacity $\capacity$, the zero-error capacity of the channel should be at least this number in order to extract maximum possible information from the sender. Alternatively, given the channel, the asymptotic rate of information extraction from any sender is bounded by the zero-error capacity of the channel. As long as both quantities are greater than unity, the receiver can extract exponentially large number of sequences from the sender.

Using this result and the bounds from earlier sections,  the following results follow.
\begin{theorem}
  Consider a sender with utility $ \ut $ and let $ \Gs $ be the corresponding sender graph. Let $ \utsym $ be its symmetric part and let $ \Gsym $ be the corresponding sender graph. Let $ \Gc $ be the confusability graph of the channel $P_{Z|Y}$.
  \begin{enumerate}
  \item If $ \Theta(\Gsym) \leq \Theta(\Gc)$, then 
  \begin{align}
  \Rscr = \capacity. \non  
  \end{align}
  \item If $\Gamma(\ut) \geq \Theta(\Gc)$, then
\begin{align}
\Rscr = \Theta(\Gc). \non  
\end{align}
\end{enumerate}
\end{theorem}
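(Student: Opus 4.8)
The plan is to treat this statement as an immediate corollary of Theorem~\ref{thm:limit-D-min-of-capa} together with the single-letter sandwich bounds on $\capacity$ derived earlier. The key observation is that Theorem~\ref{thm:limit-D-min-of-capa} already gives the clean closed form $\Rscr = \min\{\capacity,\Theta(\Gc)\}$; all that remains is to resolve the minimum under each of the two hypotheses by comparing $\capacity$ to $\Theta(\Gc)$ through quantities that do not depend on the channel.

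For part (1), I would first invoke Theorem~\ref{thm:upper-bnd-Shannon-cap-symm}, which states $\capacity \leq \Theta(\Gsym)$. Combining this with the hypothesis $\Theta(\Gsym)\leq\Theta(\Gc)$ yields $\capacity \leq \Theta(\Gc)$, whence $\min\{\capacity,\Theta(\Gc)\} = \capacity$. Substituting into Theorem~\ref{thm:limit-D-min-of-capa} gives $\Rscr = \capacity$. For part (2), I would instead invoke Theorem~\ref{thm:cap-lower-bound}, which states $\capacity \geq \Gamma(\ut)$. Together with the hypothesis $\Gamma(\ut)\geq\Theta(\Gc)$ this gives $\capacity \geq \Theta(\Gc)$, so $\min\{\capacity,\Theta(\Gc)\} = \Theta(\Gc)$, and again Theorem~\ref{thm:limit-D-min-of-capa} finishes the argument.

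There is no real obstacle here: the proof is a two-line chaining of inequalities in each case, and the only thing to be careful about is citing the correct direction of each bound (the symmetric-part bound is an upper bound on $\capacity$, the optimization-based bound $\Gamma(\ut)$ is a lower bound). If one wanted to make the statement self-contained one could also recall that $\Gamma(\ut)\le\capacity\le\Theta(\Gsym)$ is exactly the sandwich established in Sections~V and~VI, and then note that each hypothesis pins down on which side of $\Theta(\Gc)$ the value $\capacity$ must lie.
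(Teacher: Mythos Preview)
Your proposal is correct and matches the paper's own proof essentially line for line: the paper also derives both parts directly from Theorem~\ref{thm:limit-D-min-of-capa}, using Theorem~\ref{thm:upper-bnd-Shannon-cap-symm} for part (1) and Theorem~\ref{thm:cap-lower-bound} for part (2) to resolve the minimum.
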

\begin{proof}
The proofs follow by using Theorem~\ref{thm:limit-D-min-of-capa} along with Theorem~\ref{thm:upper-bnd-Shannon-cap-symm} for part 1) and Theorem~\ref{thm:cap-lower-bound} for part 2) respectively.%
\end{proof}
The above result states that if the Shannon capacity of the symmetric sender graph is less than the zero-error capacity of the channel, then the asymptotic rate of information extraction is simply the information extraction capacity of the sender. The second part of the theorem states that the receiver can recover an exponential number of sequences even when the information extraction capacity of the sender is less than the zero-error capacity of the channel, provided $\Theta(\Gc) > 1$.

This concludes our analysis on this topic of information extraction from a strategic sender. We have seen that the strategic setting demands a new line of analysis, that uses in part the traditional tools of information theory, but is rooted in concepts of game theory. It also leads to new concepts. Our main take away is that the \textit{information extraction capacity of the sender}, a concept we defined and introduced in this paper, appears to be a fundamental quantity. It plays a role loosely analogous to that of the entropy of a source, characterizing the extent of information the sender can provide (or can be extracted from it).  Future research will reveal the extent to which this analogy holds.


 \section{Conclusion}
\label{sec:concl}

To conclude, inspired by the problem of screening of travellers with questionnaires, we considered a framework where a receiver attempted to extract information from a strategic sender. This setting was posed as a non-cooperative communication problem where the receiver (a health inspector) wishes to recover information from a misreporting sender (traveller) with zero probability of error. We considered a receiver-centric viewpoint and posed the problem as a leader-follower game with the receiver as the leader and sender as the follower. We formulated two instances of the game, with a noiseless channel, and with a noisy channel. We showed that even in the presence of the noisy channel, the receiver can extract an exponential number of sequences. To achieve this, the optimal choice of strategy for the receiver is to play a selective decoding strategy that decodes meaningfully only for a subset of sequences and deliberately induces an error on the rest of the sequences. The sequences are chosen such that the sender does not have an incentive to misreport any sequence as other, whereby, it tells the truth. In the context of designing questionnaires, this corresponds to the size of the optimal questionnaire that recovers maximum number of travel histories.

Our analysis led to new concepts: the rate of information extraction and the information extraction capacity of the sender. We showed that the maximum rate of information extraction is equal to the information extraction capacity of the sender in the noiseless channel case. In the presence of the noisy channel, the receiver can still extract information with this rate, provided the zero-error capacity of the channel is larger than the information extraction capacity of the sender. We derived single-letter lower bounds and upper bounds. The lower bound is the optimal value of  an optimization problem over permutation matrices. The upper bound is the Shannon capacity of the sender graph corresponding to the symmetric part of the utility.  The information extraction capacity characterizes the fundamental limit to the amount of information that can be recovered with questionnaires.

\section*{Appendix}
\appendices
\section{Preliminaries}

  \subsection{Proof of Lemma~\ref{lem:dec-set-ind-set}}
  \label{appen:lem3.1}
  \begin{proof}
    For strategies $g_n$ of the receiver  such that $ \min_{s_n \in \pbest(g_n)}|\Dscr(g_n,s_n)| \leq 1$, the claim trivially holds. Let $g_n$ be such that $|\Dscr(g_n,s_n)| \geq 2$ for all strategies $s_n \in \pbest(g_n)$.  We prove the claim by contradiction.
    
    Suppose for some strategy $s_n \in  \pbest(g_n)$, the set $\Dscr(g_n,s_n)$ is not an independent set in $\Gs^n$. 
    Thus, there exists distinct sequences $\bar{x}, \wi{x} \in \Dscr(g_n,s_n)$ such that $\ut_n(\bar{x},\bar{x}) \leq \ut_n(\wi{x},\bar{x})$.  Using this,  define a strategy $\bar{s}_n$ as 
    \begin{align}
      \bar{s}_n(x) = \left\{
      \begin{array}{c l}
        s_n(x)& \forall \; x \neq \bar{x} \\
        s_n(\wi{x}) & \mbox{for}\; x = \bar{x}
      \end{array}\right..
    \end{align}
    Observe that $\bar{s}_n$ is also a best response since
    \begin{align}
      \ut_n(g_n \circ \bar{s}_n(x),x) =        \ut_n(g_n \circ s_n(x),x) \quad \forall \; x \neq \bar{x} \non
    \end{align}
    and for $x = \bar{x}$,
    \begin{align}
      \ut_n(g_n \circ \bar{s}_n(\bar{x}),\bar{x}) &=          \ut_n(g_n \circ s_n(\wi{x}),\bar{x})  \non \\
                                                  &= \ut_n(\wi{x},\bar{x}) \label{eq:u-xhat-xhat} \\
                                                  &\geq \ut_n(\bar{x},\bar{x}) =       \ut_n(g_n \circ s_n(\bar{x}),\bar{x}). \non
    \end{align}
    Here \eqref{eq:u-xhat-xhat} follows since $g_n \circ s_n(\wi{x}) = \wi{x}$, which in turn holds since $ \wi{x}\in \Dscr(g_n,s_n) $.

    Now, for all $x \in \Dscr(g_n,s_n)\backslash \{\bar{x}\}$,  $g_n \circ \bar{s}_n(x) = g_n \circ s_n(x) =x$ and hence $x$ lies in $ \Dscr(g_n,\bar{s}_n)$ and $\Dscr(g_n,s_n)$. However, when $x = \bar{x}$,  $g_n \circ \bar{s}_n(\bar{x}) = \wi{x} \neq \bar{x} = g_n \circ s_n(\bar{x})$. Thus, the sequence $\bar{x}$ lies in $\Dscr(g_n,s_n)$ but is not recovered by the pair $(g_n,\bar{s}_n)$ and hence does not lie in $\Dscr(g_n,\bar{s}_n)$. Thus,   $ |\Dscr(g_n,\bar{s}_n)| < |\Dscr(g_n,s_n)| $. 
    However, this is a contradiction since $s_n \in \pbest(g_n)$. Thus, for all $s_n \in  \pbest(g_n)$, the set  $\Dscr(g_n,s_n) $ is an independent set in $\Gs^n$.
  \end{proof}

\subsection{Proof of Theorem~\ref{thm:equil-dec-set-noiseless}}
\label{appen:thm3.2}
\begin{proof}
  From Lemma~\ref{lem:dec-set-ind-set},  $ \Dscr(g_n,s_n) $ is an independent set in $ \Gs^n $ for all $ g_n $ and for all  $ s_n \in \best(g_n) $. This implies that $ R(g_n) \leq \alpha(\Gs^n)^{1/n} $ for all strategies $ g_n $. We now show that for Stackelberg equilibrium strategies $ g_n^* $,  $ R(g_n^*) = \alpha(\Gs^n)^{1/n} $.
  
  Consider an independent set $I^n$ in $\Gs^n$ such that $ |I^n| = \alpha(\Gs^n) $ and define a strategy $g_n$ for the receiver  as
  \begin{align}
    g_n(x) = \left\{
    \begin{array}{c l}
      x  & \mbox{if} \; x \in I^n \\
      \Delta & \mbox{if} \; x \notin I^n
    \end{array}
\right.. \label{eq:defn-gn}
  \end{align}
  Since $\Delta$ is never preferred by the sender, we can assume without loss of generality, that for all $s_n \in \best(g_n)$  and for all $x$, $g_n \circ s_n(x) \in I^n$ and hence  $  \Dscr(g_n,s_n) \subseteq  I^n$. We will now show that in fact the two sets are equal. 

Consider an $x \in I^n$. For any $s_n \in \best(g_n)$, the utility of the sender is 
\begin{align}
  \ut_n(g_n \circ s_n(x),x) = \ut_n(x',x) \non
\end{align}
for some $x' \in I^n$. Since $I^n$  is an independent set in $\Gs^n$,  $\ut_n(x',x) < 0 $ for all $ x' \in I^n, x' \neq x $. Since $x \in I^n$ was arbitrary, 
\begin{align}
  \ut_n(g_n \circ s_n(x),x) \leq 0 \quad \forall \; x \in I^n, \non
\end{align}
with equality if and only if $ g_n \circ s_n(x) = x $. Clearly, the optimal choice of $s_n$ for the sender, is such that $ s_n(x) = x $ for all $ x \in I^n $. Specifically, all the strategies $ s_n \in \best(g_n) $ are such that $ s_n(x) = x $ for all $ x \in I^n $. Thus, for all $s_n \in \best(g_n)$,  $ \Dscr(g_n,s_n) = I^n $ and hence $ R(g_n) = \alpha(\Gs^n)^{1/n} $. It follows that for all Stackelberg equilibrium strategies $ g_n^* $ of the receiver,  $ R(g_n^*) =  \alpha(\Gs^n)^{1/n} $.
\end{proof}

\subsection{Proof of existence of information extraction capacity}
\label{appen:thm4.1}

Before proving the existence, we prove the following lemma.
\begin{lemma} \label{lem:alphaGmn-geq-GmGn}
Let $ m,n \in \Nbb $. Consider a sender with utility $ \ut $ and the corresponding sender graph $ \Gs^n $. Then,
	\begin{align}
	\alpha(\Gs^{m+n}) \geq \alpha(\Gs^m)\alpha(\Gs^n).\non
	\end{align}
      \end{lemma}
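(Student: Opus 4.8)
The plan is to prove super-multiplicativity of $\alpha(\Gs^n)$ by exhibiting, from a largest independent set $I^m$ of $\Gs^m$ and a largest independent set $I^n$ of $\Gs^n$, an independent set of $\Gs^{m+n}$ of size $|I^m|\,|I^n|$. The natural candidate is the Cartesian-product-style set $I^{m+n} := \{\, (x,y) \in \Xscr^{m+n} \;:\; x \in I^m,\ y \in I^n \,\}$, where we write a sequence in $\Xscr^{m+n}$ as a concatenation $(x,y)$ with $x\in\Xscr^m$ and $y\in\Xscr^n$. This set clearly has cardinality $|I^m|\,|I^n| = \alpha(\Gs^m)\alpha(\Gs^n)$, so the whole content is to check that it is independent in $\Gs^{m+n}$.

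First I would recall, from Definition~\ref{defn:util-graph}, that two distinct sequences $w,w'\in\Xscr^{m+n}$ are non-adjacent in $\Gs^{m+n}$ precisely when $\ut_{m+n}(w',w) < 0$ \emph{and} $\ut_{m+n}(w,w') < 0$. Then I would take two distinct elements $(x,y),(x',y')$ of $I^{m+n}$ and use the additive (per-letter) structure of $\ut_{m+n}$ in \eqref{eq:avg-util-defn}: writing things out, $(m+n)\,\ut_{m+n}\big((x',y'),(x,y)\big) = m\,\ut_m(x',x) + n\,\ut_n(y',y)$. Since $(x,y)\neq(x',y')$, at least one of $x\neq x'$ or $y\neq y'$ holds. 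If $x\neq x'$, then because $x,x'\in I^m$ and $I^m$ is independent in $\Gs^m$ we get $\ut_m(x',x) < 0$; if $x = x'$ then $\ut_m(x',x) = \ut_m(x,x) = 0$ by the normalization $\ut(i,i)=0$. The same dichotomy applies to the $y$-coordinate. Since the coordinates are not both equal, at least one of the two summands is strictly negative and the other is at most $0$, so $\ut_{m+n}\big((x',y'),(x,y)\big) < 0$. By symmetry of the argument in the two arguments, $\ut_{m+n}\big((x,y),(x',y')\big) < 0$ as well, hence $(x,y)$ and $(x',y')$ are non-adjacent in $\Gs^{m+n}$.

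Since this holds for every pair of distinct elements, $I^{m+n}$ is an independent set in $\Gs^{m+n}$, and therefore $\alpha(\Gs^{m+n}) \geq |I^{m+n}| = \alpha(\Gs^m)\alpha(\Gs^n)$. I do not anticipate a genuine obstacle here; the only point requiring care is bookkeeping the two cases (a coordinate differs versus a coordinate agrees) and making sure the normalization $\ut(i,i)=0$ is invoked to handle the agreeing coordinate, so that one strictly negative summand plus one non-positive summand gives a strictly negative total — which is exactly what Definition~\ref{defn:util-graph} needs for non-adjacency. This is the standard argument establishing that $\alpha$ is super-multiplicative under the strong product, specialized to the fact that $\Gs^{m+n}$ contains $\Gs^m \boxtimes \Gs^n$ as a (spanning) subgraph by the additivity of $\ut_n$; once the limit in Definition~\ref{defn:inf-ext-cap} is sought, Fekete's lemma applied to $\log\alpha(\Gs^n)$ then gives existence of $\capacity$.
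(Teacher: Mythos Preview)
Your proposal is correct and follows essentially the same approach as the paper: form the Cartesian product $I^m\times I^n$ of largest independent sets, use the additive decomposition $(m+n)\,\ut_{m+n}((x',y'),(x,y)) = m\,\ut_m(x',x) + n\,\ut_n(y',y)$, and observe that each summand is $\leq 0$ with at least one strictly negative since the pair is distinct. Your explicit case split on whether a coordinate agrees (invoking $\ut(i,i)=0$) or differs (invoking independence) is slightly more verbose than the paper's one-line treatment, but the argument is identical.
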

      \begin{proof}
	Consider an independent set $ I^m$ in $\Gs^m$  and an independent set $I^n$ in  $\Gs^n$.  The claim will follow by showing  that $I^m \times I^n$ is an independent set in $\Gs^{m+n}$.
	
	Consider sequences $x, y \in \Xscr^{m+n}$ such that $x = (w^m,w^n)$, $y = (v^m,v^n)$, where $w^m,v^m \in I^m$ and $w^n,v^n \in I^n$, with $w^k := (x_1,\hdots,x_k)$ and $v^k := (y_1,\hdots,y_k)$ for all $k$. Now,
	\begin{align}
          & \ut_{m+n}(y,x)  = \frac{m}{m+n} \ut_{m}(v^{m},w^{m}) + \frac{n}{m+n}\ut_{n}(v^{n},w^{n}). \non
	\end{align}
Since $I^m$ and $I^n$ are independent sets and $ x,y $ are distinct, $\ut_{m}(v^{m},w^{m}) \leq 0$ and $  \ut_{n}(v^{n},w^{n}) \leq 0$ with strict inequality in at least one the terms and hence,  $\ut_{m+n}(y,x) < 0$. This holds for all distinct sequences $x, y \in I^m \times I^n$ which shows that $I^m \times I^n$ is an independent set in $\Gs^{m+n}$. Thus,   $ \alpha(\Gs^{m+n}) \geq |I^m||I^n| $. Taking $ I^m $ and $ I^n $ to be the corresponding largest independent sets from $ \Gs^m $ and $ \Gs^n $ respectively, the claim follows.
\end{proof}

\begin{theorem}  \label{thm:exist-stra-cap}
	Consider a sender with utility $\ut$ and let $ \{\Gs^n\}_{ n\geq 1 } $ be the corresponding sequence of sender graphs. Then the limit in Definition~\ref{defn:inf-ext-cap} exists.
\end{theorem}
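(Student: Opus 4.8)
The plan is to deduce existence of the limit directly from the super-multiplicativity established in Lemma~\ref{lem:alphaGmn-geq-GmGn}, via Fekete's lemma. First I would set $a_n := \log \alpha(\Gs^n)$ for $n \geq 1$. Since $\Xscr^n \neq \emptyset$, every sender graph contains at least one vertex, and a single vertex is trivially an independent set, so $\alpha(\Gs^n) \geq 1$ and $a_n \geq 0$ is well defined. Taking logarithms in Lemma~\ref{lem:alphaGmn-geq-GmGn} gives $a_{m+n} \geq a_m + a_n$ for all $m,n \in \Nbb$; that is, $(a_n)_{n\geq 1}$ is a superadditive sequence of nonnegative reals.

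Next I would invoke the superadditive form of Fekete's lemma: for such a sequence, $\lim_{n\to\infty} a_n/n$ exists in $[0,+\infty]$ and equals $\sup_{n\geq 1} a_n/n$. To rule out the value $+\infty$, I would use the trivial bound $\alpha(\Gs^n) \leq |\Xscr^n| = q^n$, which yields $a_n/n \leq \log q$ for every $n$; hence the supremum --- and therefore the limit --- is a finite number $c \in [0,\log q]$.

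Finally, since $\alpha(\Gs^n)^{1/n} = 2^{a_n/n}$ and $t \mapsto 2^t$ is continuous, the limit $\lim_{n\to\infty}\alpha(\Gs^n)^{1/n} = 2^c$ exists and lies in $[1,q]$; this is precisely $\capacity$ of Definition~\ref{defn:inf-ext-cap}. There is no substantive obstacle here: the only points requiring a moment's care are verifying that $\alpha(\Gs^n) \geq 1$ so the logarithm is legitimate, and supplying the uniform bound $q^n$ so that Fekete's lemma delivers a finite --- not merely well-defined --- limit. As a byproduct the argument also shows $\capacity = \sup_{n\geq 1}\alpha(\Gs^n)^{1/n}$, which may be worth recording for later use.
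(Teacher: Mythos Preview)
Your proposal is correct and follows essentially the same route as the paper: set $a_n=\log\alpha(\Gs^n)$, use Lemma~\ref{lem:alphaGmn-geq-GmGn} to obtain superadditivity, apply Fekete's lemma, and pass back through the exponential by continuity. Your version is slightly more careful (you explicitly verify $\alpha(\Gs^n)\geq 1$ and use the bound $\alpha(\Gs^n)\leq q^n$ to ensure the limit is finite), and you record the useful byproduct $\capacity=\sup_{n\geq 1}\alpha(\Gs^n)^{1/n}$, but the argument is otherwise the same.
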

\begin{proof}
	From Lemma~\ref{lem:alphaGmn-geq-GmGn},  $\alpha(\Gs^{m+n}) \geq \alpha(\Gs^m)\alpha(\Gs^n)$,  for all $m, n \in \Nbb$. Define $\beta_n = \log(\alpha(\Gs^n))$ to get $\beta_{m+n} \geq \beta_{m} + \beta_m$. From Fekete's lemma \cite{schrijver2003combinatorial},  the limit of the sequence $\{\beta_n/n\}_{n \geq 1}$ exists and is equal to $ \sup_{n} \beta_n/n$. Using this and from the continuity and monotonicity of $\exp(.)$, 
	\begin{align}
	\lim_n  \;\exp \left(\frac{\beta_n}{n} \right)  = \sup_{n} \;\exp\left(\frac{\beta_n}{n}\right). \non
	\end{align}
	Substituting $\beta_n = \log(\alpha(\Gs^n))$, the claim follows.
      \end{proof}

      \subsection{Proof of Theorem~\ref{thm:generalization-shannon-cap}}
      \label{appen:thm4.2}

\begin{proof}
  Any pair of distinct symbols $ i ,j $ are adjacent in $ \Gs $ if and only if $ \ut(i,j) = 0 $. Further, for $ i \neq j $, $ \ut(i,j) = 0 $ if and only if $ \Ascr(i,j) = 1 $. Thus, distinct symbols $ i ,j $ are adjacent in $ \Gs $ if and only if $ i,j $ are adjacent in $ G $ and hence $ \Gs = G $. We now use this to show $ \Gs^n = G^{\boxtimes n } $.

  Consider two distinct sequences $ x,y \in \Xscr^n $.
  \begin{align}
    x \sim y \in \Gs^n &\Leftrightarrow \; \forall \;  k, \ut(x_k,y_k) = 0 \non \\
                       &\Leftrightarrow \; \forall \;  k, x_k = y_k \mbox{ or } x_k \sim y_k \in \Gs \non \\
                 &\Leftrightarrow \; x \sim y \in G^{\boxtimes n }. \non 
  \end{align}
  Thus, $ \Gs^n = G^{\boxtimes n } $ and hence $ \capacity = \Theta(G) $.
\end{proof}

\section{Lower Bounds}
\subsection{Proof of Theorem~\ref{thm:cap-lower-bound}}
\label{appen:thm5.2}

To prove Theorem~\ref{thm:cap-lower-bound}, we first define a lemma. For that consider the following definition
\begin{definition}\label{defn:Type-class-TKY}
Let $K \in \Nbb $ and $ \Yscr \subseteq \Xscr $. Define the set $ \TypeK $ as
\begin{align}
\TypeK = \curly{ x \in \Yscr^{K|\Yscr|} : P_x(i) = \frac{1}{|\Yscr|} \quad \forall \; i \in \Yscr}.  \non
\end{align}
\end{definition}
Thus, the set $ \TypeK $ is a set of all those sequences where every symbol from the set $ \Yscr $ occurs exactly $ K $ times. 


The following lemma gives a sufficient condition in terms of the optimization problem $ \Oscr(\ut) $ for the independence of the set $ \TypeK $.
\begin{lemma}\label{lem:opt-p-Tkw-ind}
Let $ n, K \in \Nbb $. Consider a sender with  utility $ \ut$ and let $ \Gs^n $ be the corresponding sender graph. Let $ \Yscr \subseteq \Xscr $ be a set feasible for the problem $ \Oscr(\ut) $. Then,  $\TypeK$ is an independent set in the graph $\Gs^{K|\Yscr|}$.
\end{lemma}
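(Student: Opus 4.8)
The plan is to fix two distinct sequences $x,y \in \TypeK$, set $n = K|\Yscr|$, and show that both $\ut_n(y,x) < 0$ and $\ut_n(x,y) < 0$; by Definition~\ref{defn:util-graph} this means $(x,y) \notin E_{\sf s}$, and since $x,y$ are arbitrary distinct elements of $\TypeK$, it follows that $\TypeK$ is an independent set in $\Gs^n = \Gs^{K|\Yscr|}$.

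First I would pass to the joint type of the pair $(x,y)$. For $i,j \in \Yscr$ set $M_{ij} = |\{t \in [n] : y_t = i,\ x_t = j\}|$. Because $x$ and $y$ both lie in $\TypeK$, every symbol of $\Yscr$ occurs exactly $K$ times in $x$ and exactly $K$ times in $y$, so each row sum and each column sum of $M$ equals $K$; equivalently, $M/K$ is a doubly stochastic matrix. Grouping the coordinates in the empirical average by value pair then gives $\ut_n(y,x) = \frac{1}{n}\sum_{i,j \in \Yscr} M_{ij}\,\ut(i,j)$.

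Next I would invoke the Birkhoff--von Neumann theorem: since $M/K$ is doubly stochastic, $M/K = \sum_{Q \in \Qscr} \lambda_Q Q$ for some coefficients $\lambda_Q \geq 0$ with $\sum_{Q} \lambda_Q = 1$. Substituting and using $n = K|\Yscr|$ gives $\ut_n(y,x) = \frac{1}{|\Yscr|}\sum_{Q \in \Qscr} \lambda_Q \big( \sum_{i,j \in \Yscr} Q(i,j)\,\ut(i,j)\big)$. The term with $Q = \mathbf{I}$ equals $\sum_i \ut(i,i) = 0$ by the standing assumption $\ut(i,i)=0$, while every term with $Q \in \Qscr \setminus \{\mathbf{I}\}$ is strictly negative because $\Yscr$ is feasible for $\Oscr(\ut)$. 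Hence $\ut_n(y,x) \le 0$, and it is strictly negative as soon as some $Q \neq \mathbf{I}$ carries positive weight $\lambda_Q$. This is precisely where $x \neq y$ enters: if the decomposition were supported entirely on $\mathbf{I}$, then $M = K\mathbf{I}$, i.e.\ $M_{ij}=0$ for all $i \neq j$, which forces $x_t = y_t$ for every $t$, contradicting $x \neq y$. Running the identical argument with the roles of $x$ and $y$ interchanged (equivalently, replacing $M$ by $M^{\top}$, which is again $K$ times a doubly stochastic matrix) yields $\ut_n(x,y) < 0$, which completes the argument.

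The proof is largely routine; the only steps that need care are the bookkeeping that turns the uniformity of the type into the statement that $M/K$ is doubly stochastic and that matches the bilinear form $\sum_{i,j}Q(i,j)\ut(i,j)$ from $\Oscr(\ut)$ with the single-letterized utility $\ut_n(y,x)$, and the one genuine use of $x \neq y$: that the Birkhoff decomposition of $M/K$ must then assign positive weight to at least one non-identity permutation.
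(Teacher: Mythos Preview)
Your proof is correct and follows essentially the same approach as the paper: both pass to the joint type (your $M/n$ is exactly the paper's $P_{x,y}$), observe that the appropriate scaling is doubly stochastic because $x,y\in\TypeK$, apply Birkhoff--von Neumann, and use feasibility of $\Yscr$ together with $x\neq y$ to force a strictly negative contribution from some non-identity permutation. Your explicit handling of both $\ut_n(y,x)$ and $\ut_n(x,y)$ is a minor nicety; the paper leaves this implicit by quantifying over all distinct pairs.
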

\begin{proof}
Fix a $ K \in \Nbb $. For distinct sequences $x,y \in \TypeK$,
  \begin{align}
    \ut_{K|\Yscr|}(y,x) &= \sum_{k \in [K|\Yscr|]} \frac{\ut(y_k,x_k)}{K|\Yscr|} = \sum_{i,j \in \Yscr} P_{x,y}(i,j)\ut(i,j), \non
  \end{align}
where the joint empirical distribution $ P_{x,y} $ satisfies   
\begin{align}
\sum_{i \in \Yscr}P_{x,y}(i,j) = \sum_{j \in \Yscr}P_{x,y}(i,j) = \frac{1}{|\Yscr|}, \non
\end{align}
for all $ i,j \in \Yscr $.

 Using the Birkhoff-von Neumann theorem \cite{schrijver2003combinatorial},  $ P_{x,y} $ can be given as a convex combination of the permutation matrices $ \Qscr = \{Q^{(0)},\hdots,Q^{(|\Qscr|)}\} $, \ie,  
  \begin{align}
    P_{x,y} = \frac{1}{|\Yscr|} \sum_{m} \alpha_m Q^{(m)}, \non
  \end{align}
  where $ \sum_{m}\alpha_m = 1$ and $ \alpha_m \geq 0 \;\forall \;m$. Since $ x $ and $ y $ are distinct, it follows that $ \alpha_0 < 1 $, where $ \alpha_0 $ corresponds to the coefficient of the identity matrix $ Q^{(0)} $. Thus, 
  \begin{align}
    \ut_{K|\Yscr|}(y,x)
                         &= \sum_{i,j \in \Yscr} P_{x,y}(i,j)\ut(i,j) \non \\
    &=  \frac{1}{|\Yscr|} \sum_{m \in [|\Qscr|]} \alpha_m\sum_{i,j \in \Yscr}  Q^{(m)}(i,j)\ut(i,j). \non
  \end{align}

  Since $ \Yscr$ is feasible for $ \Oscr(\ut) $, the term $ \sum_{i,j \in \Yscr}  Q^{(m)}(i,j)\ut(i,j) $ is negative for all $ m \in [|\Qscr|] $.  Thus, $ \ut_{K|\Yscr|}(y,x) < 0 $ for all distinct $ x, y \in \TypeK $ and hence, $ \TypeK $ is an independent set in the graph $ \Gs^{K|\Yscr|} $.
\end{proof}

We now prove Theorem~\ref{thm:cap-lower-bound}. \\
\begin{proof}[\textit{of Theorem~\ref{thm:cap-lower-bound}}]
  	Consider a set $ \Yscr^* \in \argmax_{\Yscr} \Oscr(\ut) $.  From Lemma~\ref{lem:opt-p-Tkw-ind}, it follows that $ T_{\Yscr^*}^K $ is an independent set in $ \Gs^{K|\Yscr^*|} $ for all $ K \in \Nbb $ and hence $ \alpha(\Gs^{K|\Yscr^*|}) \geq |T_{\Yscr^*}^K| $. Taking the limit and applying Stirling's approximation gives that  $\capacity \geq |\Yscr^*| =  \Gamma(\ut) $.
\end{proof}

\subsection{Proof of Proposition~\ref{prop:i1-i2-condi-for-Y}}
\label{appen:prop5.5}
For ease of explanation, we define the set of symbols satisfying \eqref{eq:neg-weight-chain} as a  negative-weight chain.
\begin{definition}[Negative-weight chain]\label{defn:neg-wt-cycle}

        A sequence of distinct symbols  $ i_0,i_1 ,\hdots, i_{K-1} \in \Yscr $ form a negative-weight chain if   
  \begin{align}
    \ut(i_1,i_0) + \ut(i_2,i_1) + \hdots + \ut(i_0,i_{K-1}) < 0. \non
  \end{align}
\end{definition}
Notice that the chain in the above definition is a closed-chain since the first and last symbols are the same. For brevity, we describe a closed-chain as simply a chain.

      \begin{proof}
  Suppose a set $ \Yscr \subseteq \Xscr $ is such that any sequence of distinct symbols $ i_0,i_1 ,\hdots, i_{K-1} \in \Yscr $ form a negative-weight chain. We show that $ \Yscr $ is feasible for $ \Oscr(\ut) $. 

  Consider a permutation matrix $ Q \in \Qscr \setminus \{\mathbf{I}\} $. 
  For this permutation matrix $ Q $, define a permutation $ \pi : \Yscr \rarr \Yscr $ as $ i = \pi(j) $  if and only if $ Q(i,j) = 1 $.
  
  Now, a permutation can be decomposed into disjoint cycles. Thus, a permutation $ \pi : \Yscr \rarr \Yscr $ can be represented as 
\begin{align}
  (i_0^1  \hdots i_{K_1-1}^1) \hdots (i_0^M  \hdots i_{K_M-1}^M), (v_1)  \hdots (v_R), \label{eq:decom-in-disjoint-cycles}
\end{align}
where $ i_k^j, v_k \in \Yscr $ are all distinct symbols and $ M ,K_m, R \in \Nbb, K_m \geq 2 \;\forall \;m $, are such that $ \sum_{m \in [M]}K_m + R = |\Yscr| $.  The above arrangement represents that for all $ m \in [M] $, the symbols $ \{i_0^m,\hdots,i_{K_m-1}^m \} $ form cycles where $ \pi(i_k^m) = i_{(k+1) \mbox{mod }K_m}^m$. Further,  $ \pi(v_k) = v_k $ for all $ k \in [R]$. 


Using this  representation in  \eqref{eq:decom-in-disjoint-cycles}, we write
\begin{align}
  & \sum_{i,j \in \Yscr}  Q(i,j)\ut(i,j) =  \sum_{j \in \Yscr}  \ut(\pi(j),j) \non \\
  &= \ut(i_1^1,i_0^1) + \ut(i_2^1,i_1^1) + \hdots + \ut(i_0^1,i_{K_1-1}^1) + \hdots \non \\
  & + \ut(i_1^M,i_0^M) + \ut(i_2^M,i_1^M) + \hdots + \ut(i_0^M,i_{K_M-1}^M) \non \\
  &< 0. \non
\end{align}
The last inequality follows since any sequence of distinct symbols from the set $ \Yscr $ form a negative-weight chain. The above relation is true for all permutation matrices $ Q \in \Qscr \setminus \{\mathbf{I}\}$.  Hence, the set $ \Yscr $ is feasible for $ \Oscr(\ut) $.

Suppose the set $ \Yscr $ is feasible for $ \Oscr(\ut) $. Take distinct symbols $ i_0,i_1 ,\hdots, i_{K-1} \in \Yscr $  for some $ K \in [|\Yscr|], \; K \geq 2 $ and choose a permutation $ \pi' : \Yscr \rarr \Yscr $ such that  $ i_0,i_1 ,\hdots, i_{K-1} $ form a cycle as $ \pi'(i_{k}) = i_{(k+1) \mbox{mod} K}$. The rest of the symbols are mapped to themselves. Let $ Q^{\pi'} $ be the corresponding permutation matrix. Observe that $ Q^{\pi'} $ is not an identity matrix. Then, it follows from feasibility of $ \Yscr $ that
\begin{align}
  &\sum_{i,j \in \Yscr }Q^{\pi'}(i,j)\ut(i,j) \non \\
  &=   \ut(i_1,i_0) + \ut(i_2,i_1) + \hdots + \ut(i_0,i_{K-1}) < 0, \non
\end{align}
and hence the symbols form a negative-weight chain. Since the symbols were chosen arbitrarily, this holds for all sequences of distinct symbols from the set $ \Yscr $. This completes the proof.
\end{proof}

\subsection{Proof of Theorem~\ref{thm:suff-cond-for-pos-LP}}
\label{appen:thm7.2}

Recall the definition of a positive-edges cycle in Definition~\ref{defn:pos-wt-cycle}. The following lemma demonstrates that it is necessary that a set $ \Yscr $ does not contain a positive-edges cycle for it to be feasible for $ \Oscr(\ut)$. 
\begin{lemma} \label{lem:direc-cycle-LP-neg}
	Consider a sender with utility $ \ut $ and let $ \Gs $ be the corresponding sender graph. Consider a $ \Yscr $ which contains a positive-edges cycle. Then,  $ \Yscr $ is  not feasible for $ \Oscr(\ut) $.
      \end{lemma}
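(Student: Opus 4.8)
The plan is to reduce the claim immediately to the characterization of feasibility established in Proposition~\ref{prop:i1-i2-condi-for-Y}: a set $\Yscr \subseteq \Xscr$ is feasible for $\Oscr(\ut)$ if and only if every sequence of distinct symbols from $\Yscr$ forms a negative-weight chain in the sense of Definition~\ref{defn:neg-wt-cycle}. Consequently, to prove infeasibility it suffices to exhibit a single sequence of distinct symbols in $\Yscr$ that fails the negative-weight condition, and a positive-edges cycle supplies exactly such a sequence by construction.

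Concretely, let $\{i_0, i_1, \ldots, i_{K-1}\} \subseteq \Yscr$ be a positive-edges cycle as in Definition~\ref{defn:pos-wt-cycle}. These are distinct symbols, and by definition $\ut(i_l, i_m) \geq 0$ whenever $l = (m+1) \bmod K$. First I would simply sum these inequalities around the cycle:
\[
\ut(i_1,i_0) + \ut(i_2,i_1) + \cdots + \ut(i_0,i_{K-1}) = \sum_{m=0}^{K-1} \ut\big(i_{(m+1)\bmod K}, i_m\big) \geq 0 .
\]
Hence the left-hand side is not strictly negative, so $i_0, i_1, \ldots, i_{K-1}$ do not form a negative-weight chain. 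By the ``only if'' direction of Proposition~\ref{prop:i1-i2-condi-for-Y}, $\Yscr$ is not feasible for $\Oscr(\ut)$.

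An essentially equivalent route, if one prefers to argue from the definition of $\Oscr(\ut)$ directly, is to produce the offending permutation matrix: let $\pi : \Yscr \rarr \Yscr$ cyclically shift $i_0 \mapsto i_1 \mapsto \cdots \mapsto i_{K-1} \mapsto i_0$ and fix every other element of $\Yscr$, and let $Q^\pi \in \Qscr$ be its matrix. Since $K \geq 2$, $Q^\pi \neq \mathbf{I}$, yet $\sum_{i,j \in \Yscr} Q^\pi(i,j)\ut(i,j) = \sum_{m=0}^{K-1} \ut(i_{(m+1)\bmod K}, i_m) \geq 0$, which violates the constraint of $\Oscr(\ut)$ that the corresponding sum be strictly negative for all non-identity permutations. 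I do not expect any genuine obstacle here; the only point requiring care is aligning the index convention of Definition~\ref{defn:pos-wt-cycle} ($\ut(i_l,i_m) \geq 0$ for $l = (m+1)\bmod K$) with the chain appearing in Definition~\ref{defn:neg-wt-cycle}, so that the summands are precisely the non-negative ``edges'' of the cycle.
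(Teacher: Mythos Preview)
Your proposal is correct and follows essentially the same approach as the paper: both invoke Proposition~\ref{prop:i1-i2-condi-for-Y} and observe that the positive-edges cycle furnishes a sequence of distinct symbols from $\Yscr$ whose chain sum is nonnegative, violating the negative-weight chain condition. Your write-up is in fact more explicit than the paper's, which simply states that the existence of such a cycle ``clearly'' means $\Yscr$ fails the condition of Proposition~\ref{prop:i1-i2-condi-for-Y}; the alternative permutation-matrix argument you give is also fine but not used in the paper.
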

      \begin{proof}
  Since there  exists a positive-edges cycle in $ \Yscr $, there is a sequence of distinct symbols  $ i_0,i_1 ,\hdots, i_{K-1} \in \Yscr $ such that   
  \begin{align}
    \ut(i_l,i_m) \geq 0 \quad \forall \; \; l = (m+1) \mbox{mod}\; K. \non 
  \end{align}
  Clearly, this means that the set $ \Yscr $ does not satisfy the condition given in Proposition~\ref{prop:i1-i2-condi-for-Y}.
\end{proof}

\begin{proof}[\textit{of Theorem~\ref{thm:suff-cond-for-pos-LP}}]
  Let $ i_0,\hdots,i_{K-1} $ be distinct symbols from $ \Yscr $.   Since there is no positive-edges cycle in $ \Yscr $, there is at least one pair of $ (i_k,i_j) $, $ k > j $  such that  $ \ut(i_k,i_j) < 0 $. Then, 
  \begin{align}
    &\ut(i_1,i_0) + \ut(i_2,i_1) + \hdots + \ut(i_0,i_{K-1}) \non \\
    &\leq \min_{i,j \in \Yscr: \ut(i,j) < 0} |\ut(i,j)| \non \\
    &+ (|\Yscr|-1)\max_{i,j \in \Yscr: \ut(i,j) \geq 0} \ut(i,j) \non \\
    &< 0. \non 
  \end{align}
  Since $ i_0,\hdots,i_{K-1} $ were arbitrary, using Proposition~\ref{prop:i1-i2-condi-for-Y}, it follows that $ \Yscr $ is feasible for $ \Oscr(\ut) $.
\end{proof}


\subsection{Proof of Corollary~\ref{coro:Y-is-ind-set}}
\label{appen:coro5.7}

    \begin{proof}
  If $ \Yscr $ is feasible for $ \Oscr(\ut) $, then
 \begin{align}
    \ut(i_2,i_1) + \ut(i_1,i_2)  < 0 \quad \forall \;  i_1, i_2 \in \Yscr, i_1 \neq i_2. \non
  \end{align}
This implies that for all distinct symbols $ i_1,i_2 \in \Yscr $, we have $ \utsym(i_1,i_2) = \utsym(i_2,i_1) < 0 $.  Thus, $ \Yscr $ is an independent set in $ \Gsym $. The upper bound on $ \Gamma(\ut) $ is now obvious.
\end{proof}

\subsection{Proof of Theorem~\ref{thm:limit_of_heirarchy}}
\label{appen:thm5.12}

First, consider a multiplicative property of $ \Gamma(\ut_n) $ analogous to Lemma~\ref{lem:alphaGmn-geq-GmGn}. 
   \begin{lemma} \label{lem:gamma-m+n-gamma-m-times-n}
Let $ m,n \in \Nbb $. For any utility $ \ut $,
     \begin{align}
       \Gamma(\ut_{m+n}) \geq \Gamma(\ut_{m})\Gamma(\ut_{n}). \non
     \end{align}
   \end{lemma}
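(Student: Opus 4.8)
The plan is to show that the Cartesian product $\Yscr^* := \Yscr_m^* \times \Yscr_n^*$ of an optimal feasible set $\Yscr_m^*$ for $\Oscr(\ut_m)$ and an optimal feasible set $\Yscr_n^*$ for $\Oscr(\ut_n)$ is itself feasible for $\Oscr(\ut_{m+n})$; since $|\Yscr^*| = |\Yscr_m^*|\,|\Yscr_n^*| = \Gamma(\ut_m)\Gamma(\ut_n)$, the bound then follows at once. Throughout I identify a length-$(m{+}n)$ sequence with the pair of its first $m$ and last $n$ coordinates, so that $\ut_{m+n}((a',b'),(a,b)) = \tfrac{m}{m+n}\ut_m(a',a) + \tfrac{n}{m+n}\ut_n(b',b)$, exactly as in the proof of Lemma~\ref{lem:alphaGmn-geq-GmGn}.

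Let $Q$ be an arbitrary non-identity permutation matrix on $\Yscr^*$ with associated permutation $\pi$. Decomposing $\pi$ into disjoint cycles as in the proof of Proposition~\ref{prop:i1-i2-condi-for-Y}, the fixed points of $\pi$ contribute $\ut_{m+n}((a,b),(a,b)) = 0$, so to verify the constraint of $\Oscr(\ut_{m+n})$ it suffices to show that each nontrivial cycle $(a_0,b_0) \to (a_1,b_1) \to \cdots \to (a_{K-1},b_{K-1}) \to (a_0,b_0)$, with $K \ge 2$ and the pairs distinct, contributes strictly negatively, i.e.
\[
\sum_{l=0}^{K-1}\ut_{m+n}\big((a_{l+1},b_{l+1}),(a_l,b_l)\big) = \frac{m}{m+n}\,S_a + \frac{n}{m+n}\,S_b < 0,
\]
with indices mod $K$, where $S_a = \sum_l \ut_m(a_{l+1},a_l)$ and $S_b = \sum_l \ut_n(b_{l+1},b_l)$.

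The crux is to prove $S_a \le 0$, with equality only if $a_0 = a_1 = \cdots = a_{K-1}$ (and symmetrically for $S_b$). Here the difficulty is that, although the pairs $(a_l,b_l)$ are distinct, the coordinate sequence $(a_l)$ may repeat, so feasibility of $\Yscr_m^*$ — which via Proposition~\ref{prop:i1-i2-condi-for-Y} only constrains cyclic sums over \emph{distinct} symbols — does not apply directly. To bridge this, view the ordered pairs $(a_l,a_{l+1})$ as the edge multiset of a closed walk on the complete digraph with vertex set $\Yscr_m^*$ and self-loops, where edge $i\to j$ has weight $\ut_m(j,i)$; since in-degree equals out-degree at every vertex, this multiset decomposes into edge-disjoint simple directed cycles (self-loops being length-$1$ cycles). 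By feasibility of $\Yscr_m^*$, every simple cycle of length $\ge 2$ has strictly negative weight, while every self-loop $i\to i$ has weight $\ut_m(i,i)=0$ by the normalization $\ut(i,i)=0$; hence $S_a \le 0$, and $S_a = 0$ forces the decomposition to consist only of self-loops, i.e. $a_l = a_{l+1}$ for every $l$, i.e. the $a_l$ are all equal. The same argument applies to $S_b$. Finally, since $K \ge 2$ and the pairs $(a_l,b_l)$ are distinct, the $(a_l)$ and the $(b_l)$ cannot both be constant, so at least one of $S_a, S_b$ is strictly negative and the other is $\le 0$; the displayed convex combination is therefore strictly negative. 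This shows $\Yscr^*$ is feasible for $\Oscr(\ut_{m+n})$, whence $\Gamma(\ut_{m+n}) \ge |\Yscr^*| = \Gamma(\ut_m)\Gamma(\ut_n)$.

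The main obstacle is precisely that middle reduction: converting the cyclic coordinate sums $S_a, S_b$, which run over possibly-repeating symbols, into something governed by the distinct-symbol feasibility condition. The Euler-type cycle decomposition together with $\ut(i,i)=0$ is what makes this work, and it simultaneously supplies the equality characterization ($S_a=0$ iff the $a_l$ are all equal) needed to upgrade "$\le 0$" to the strict inequality required for feasibility.
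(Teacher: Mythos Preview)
Your proof is correct and follows essentially the same route as the paper's: both show that $\Yscr_m^* \times \Yscr_n^*$ is feasible for $\Oscr(\ut_{m+n})$ via the negative-weight chain characterization (Proposition~\ref{prop:i1-i2-condi-for-Y}), and both handle the possible repetitions in the coordinate projections by decomposing the resulting closed walk into simple cycles on distinct symbols. Your Euler-type in-degree/out-degree argument and the paper's ``find the first repeat, peel off the sub-chain, iterate'' are two standard phrasings of the same cycle-extraction; your treatment of the equality case ($S_a = 0$ iff all $a_l$ equal) is a bit more explicit, but otherwise the arguments coincide.
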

 \begin{proof}
 Consider two sets $ \Yscr_m, \Yscr_n $ that are optimal for $ \Oscr(\ut_m) $ and $ \Oscr(\ut_n) $ respectively.
     Define $ \Yscr^{m+n} = \Yscr_m \times \Yscr_n $. For any $ y \in \Yscr^{m+n} $, we denote $ y = (\ybar, \ytilde) $, $ \ybar \in \Yscr_m, \ytilde \in \Yscr_n$. Moreover, notice that for any $ y,z \in \Yscr^{m+n} $
\[ \ut_{m+n}(y,z) = \frac{m}{m+n}\ut_m(\ybar,\zbar) +  \frac{n}{m+n}\ut_n(\ytilde,\ztilde).\]
      Now, consider distinct sequences  $ y^1, y^2, \hdots, y^L  \in \Yscr^{m+n} $. Using the above, we write
          \begin{align}
            &   \ut_{m+n}(y^2,y^1) +        \ut_{m+n}(y^3,y^2) + \hdots +        \ut_{m+n}(y^1,y^L) \non \\
            & = \frac{m}{m+n}  \Big(       \ut_m(\ybar^2,\ybar^1) +        \ut_m(\ybar^3,\ybar^2) + \hdots +        \ut_m(\ybar^1,\ybar^L) \Big) \non \\
            &+ \frac{n}{m+n} \Big(       \ut_n(\ytilde^2,\ytilde^1) +        \ut_n(\ytilde^3,\ytilde^2) + \hdots +        \ut_n(\ytilde^1,\ytilde^L) \Big). \non
          \end{align}
          Suppose there exists some  $ l_1 < l_2 $ such that  $ \ybar^{l_1} = \ybar^{l_2} $. Suppose all the sequences $ \ybar^{l_1},\ybar^{l_1 + 1},\hdots,\ybar^{l_2-1} $ between $ l_1$  and $ l_2 $ are distinct. Using the analogous result of Proposition~\ref{prop:i1-i2-condi-for-Y} for $n$-length sequences, it can be observed that $ \ybar^{l_1},\ybar^{l_1 + 1},\hdots,\ybar^{l_2-1} $ form a negative-weight chain. Thus, removing the edges corresponding to this chain we get
          \begin{align}
            &             \ut_m(\ybar^2,\ybar^1) +  \ut_m(\ybar^3,\ybar^2) + \hdots +        \ut_m(\ybar^1,\ybar^L) \non \\
            &\leq   \sum_{l \in [L] \setminus \{l_1,\hdots,l_2-1\}}      \ut_m(\ybar^{ (l+1) \mbox{mod } L},\ybar^l). \non 
          \end{align}
          We remove all such negative-weight chains from $ \{\ybar^l \}_{l \in  [L]}$ and $ \{\ytilde^l \}_{l \in [L]} $ to get a pair of sets of distinct sequences $  \{\ybar^l \}_{l \in  [\Lbar]} $  and $ \{\ytilde^l \}_{l \in  [\Ltilde]} $ where $ \Lbar, \Ltilde \subseteq [L] $. Since $ y^1, \hdots, y^L $ are all distinct sequences, either $ \Lbar \neq \emptyset $ or $ \Ltilde \neq \emptyset $. Moreover, since $ \{\ybar^l \}_{l \in  [\Lbar]} \subseteq \Yscr_m $  and $ \{\ytilde^l \}_{l \in  [\Ltilde]}  \subseteq \Yscr_n $, the sequences form negative-weight chains and hence we get 
                    \begin{align}
            &   \ut_{m+n}(y^2,y^1) +        \ut_{m+n}(y^3,y^2) + \hdots +        \ut_{m+n}(y^1,y^L) \non \\
                      &\leq  \frac{m}{m+n}   \sum_{l \in \Lbar}      \ut_m(\ybar^{ (l+1) \mbox{mod } L},\ybar^l) \non  \\
        &             + \frac{n}{m+n}  \sum_{l \in \Ltilde}      \ut_n(\ytilde^{ (l+1) \mbox{mod } L},\ytilde^l) < 0. \non 
          \end{align}
This is true for all sets of distinct sequences from $\Yscr^{m+n}$.   Thus, $ \Yscr^{m+n} $ is feasible for $ \Oscr(\ut_{m+n}) $ and hence 
          \begin{align}
\Gamma(\ut_{m+n}) \geq |\Yscr^{m+n}|  = \Gamma(\ut_m)\Gamma(\ut_n). \non
          \end{align}

        \end{proof}


\begin{proof}[\textit{of Theorem~\ref{thm:limit_of_heirarchy}}]
  From Corollary~\ref{corr:ind-set-feas-for-O}, it follows that all independent sets of $ \Gs^n $ are feasible for $ \Oscr(\ut_n) $. Thus, $ \Gamma(\ut_n) \geq \alpha(\Gs^n) $ and hence $ \lim_{n \rarr \infty} \Gamma(\ut_n)^{1/n}  \geq \capacity $.
  
   Now we prove that  $  \lim_{n \rarr \infty}\Gamma(\ut_n)^{1/n} \leq \capacity  $.    Observe that Lemma~\ref{lem:opt-p-Tkw-ind} does not impose any specific structure for the set $ \Xscr $. In particular, in place of the  set $ \Xscr $ we can consider its $n$-fold Cartesian product $ \Xscr^n $ and the results will hold.

Let $ K \in \Nbb. $  Consider a set $ \Yscr_n $ that  maximizes $ \Oscr(\ut_n) $.  Following Definition~\ref{defn:Type-class-TKY}, we define a set $ T_{\Yscr_n}^{K} $ which consists of sequences constructed by concatenating sequences from $ \Yscr_n $, each appearing exactly $ K $ times. This construction gives sequences of length  $ nK|\Yscr_n| $. Analogous to Lemma~\ref{lem:opt-p-Tkw-ind} it follows that $ T_{\Yscr_n}^{K} $  is an independent set in $ \Gs^{nK|\Yscr_n|} $ for all $ K \in \Nbb $. Consequently, the capacity is bounded as 
  \begin{align}
    \capacity &= \lim_{K \rarr \infty} \alpha(\Gs^{nK|\Yscr_n|})^{\frac{1}{nK|\Yscr_n|}} \non \\
    &\geq   \lim_{K \rarr \infty}  \Big(|T_{\Yscr_n}^{K}|^{\frac{1}{ K|\Yscr_n|}}\Big)^{1/n}. \non 
  \end{align}
We use  Stirling's approximation to determine $ \lim_{K \rarr \infty}|T_{\Yscr_n}^{K}|^{1/ K|\Yscr_n|} $. Notice that $ |T_{\Yscr_n}^{K}| $ is now given as $  |T_{\Yscr_n}^{K}| = (K |\Yscr_n|)!/(K!)^{|\Yscr_n|}$.  Thus, 
  \begin{align}
\capacity  \geq   \lim_{K \rarr \infty}  \Big(|T_{\Yscr_n}^{K}|^{\frac{1}{ K|\Yscr_n|}}\Big)^{1/n} = |\Yscr_n|^{1/n} = \Gamma(\ut_n)^{1/n}. \non 
  \end{align}
  Taking the limit as $ n \rarr \infty $, the claim follows.
             \end{proof}

  \subsection{Computational approach to check for the feasibility of a set $ \Yscr$}
\label{appen:comp_approach}
  We now discuss a computational approach that can be used to check for the feasibility of a set $ \Yscr $ for the problem $ \Oscr(\ut)$.
  Given a set of vertices $ \Yscr $, its feasibility for $ \Oscr(\ut) $ can be checked using the Bellman-Ford algorithm in the following way. Define a weighted directed graph $ \bar{\Gs} $ with vertices $ \Xscr $ induced by $ \ut $ where every vertex is connected to every other vertex and the edge from a vertex $ i $ to $ j $ has weight $ \ut(j,i) $. Define a utility $ \ut' = - \ut $ and let the directed graph induced by $ \ut' $ be denoted as $ \Gs' $. Suppose we have a set $ \Yscr $ that is feasible for the problem $ \Oscr(\ut) $. Then, from the characterization of feasible region of $ \Oscr(\ut) $, it implies that there is no zero or positive-weight directed cycle in the subgraph of $ \bar{\Gs} $ induced by  $ \Yscr $. Equivalently, there is no negative-weight or zero-weight directed cycle in the subgraph of $ \Gs' $ induced by $ \Yscr $. We use this observation as follows.

  The Bellman-Ford algorithm (\cite{cormen2009introduction}, Ch. 26), determines the shortest paths to all vertices in a directed graph from a given source vertex. It is also known that the algorithm detects whether the graph has a negative-weight directed cycle in the graph, in which case there may not exist a shortest path between two vertices. Suppose there is no zero-weight directed cycle in the graph $ \Gs' $. Then, this algorithm can be used to check for the feasibility of a given $ \Yscr $ in the following manner. 
  \begin{itemize}
  \item Given a set $ \Yscr $ consider the subgraph of $ \Gs' $ induced by the vertices $ \Yscr $
  \item Apply the Bellman-Ford algorithm on this subgraph 
  \item If the algorithm detects a negative-weight directed cycle in the subgraph, then the set $ \Yscr $ is not feasible for $ \Oscr(\ut) $. Otherwise, $ \Yscr $ is feasible for $ \Oscr(\ut) $.

  \end{itemize}

  In the worst case, we need to check feasibility of all subsets of $ \Xscr $ in order to determine the optimal $ \Yscr $. However, the total number of subsets of  $ \Xscr $ may be very large which may make this procedure impractical.  
  Nevertheless, the above method is still independent of the blocklength $ n $ and is thus an effective tool to derive a lower bound for the capacity.

\section{Upper Bounds}   

   \subsection{Proof of Lemma~\ref{lem:symm-util-upper-bnd}}
\label{appen:lem6.1}
    \begin{proof}
Let $ n \in \Nbb $.     Consider distinct sequences $ x,y \in  \Xscr^n $ which are adjacent in $ (\Gsym)^n $. Thus,  $ \utsym_n(y,x) = \utsym_n(x,y) =  (\ut_n(y,x) + \ut_n(x,y))/2 \geq  0 $, which gives that either $ \ut_n(y,x) \geq 0 $  or $ \ut_n(x,y) \geq 0 $. In either case, $ x,y $ are  adjacent in $ \Gs^n $ and hence $ (\Gsym)^n$ is a subgraph of $\Gs^n $. This gives $ \alpha(\Gs^n) \leq  \alpha((\Gsym)^n) $. The claim follows by taking the limit of the $n$-th root.
\end{proof}

\subsection{Proof of Theorem~\ref{thm:upper-bnd-Shannon-cap-symm}}
\label{appen:thm6.2}
\begin{proof}
 Let $ n \in \Nbb $. We first prove that if $ \ut $ is  symmetric then $ \Xi(\ut) \leq \Theta(\ut) $. Consider the graph $ \Gs^{\boxtimes n} $ derived by taking the $n$-fold strong product of $ \Gs $. Consider a distinct pair of sequences $ x,y \in \Xscr^n $ that are adjacent in $ \Gs^{\boxtimes n} $. Then, for all $ k \in [n] $, either $x_k = y_k $ or $ x_k, y_k$ are  adjacent in $ \Gs $. Thus, 
\begin{align}
\ut_n(y,x) = \frac{1}{n}\sum_{y_k \neq x_k} \ut(y_k,x_k) \geq 0. \non
\end{align}
Hence, $ x, y $  are adjacent in $ \Gs^n $ as well, which gives $ \Gs^{\boxtimes n}$ is a subgraph of $\Gs^n $. Thus, $ \alpha(\Gs^n)\leq \alpha(\Gs^{\boxtimes n})$ and taking the limit of the $n$-th root and using  Lemma~\ref{lem:symm-util-upper-bnd}, the claim follows.
\end{proof}

\subsection{Proof of Theorem~\ref{thm:posi-diff-greater}}
\label{appen:thm6.4}
\begin{proof}
Let $ n \in \Nbb $. We prove this by showing that if distinct sequences $ x,y \in \Xscr^n $ are not adjacent in $ \Gs^n $, then $ x, y $ are not adjacent in $ \Gs^{\boxtimes n} $.

    Let $ x, y $ be not adjacent in $ \Gs^n $. Then, $ \ut_n(y,x) < 0 $ and $ \ut_n(x,y) < 0 $. Since, $b \leq a$, it follows that
    \begin{align}
      \Big|\{k : \ut(y_k,x_k) = -b\}\Big| &> \frac{n}{2}, \non \\
      \;\; \Big|\{k : \ut(x_k,y_k) = -b\}\Big| &> \frac{n}{2}. \non
    \end{align}
 This implies that there is some $k$ such that $\ut(y_k,x_k) = \ut(x_k,y_k) = -b$ and hence  $x_k, y_k $  are not adjacent in $ \Gs $. It follows that $ x, y  $ are not adjacent in $ \Gs^{\boxtimes n} $ as well.  This gives that $  \alpha(\Gs^n) \leq \alpha(\Gs^{\boxtimes n}) $. 
 The claim follows by taking the limit.
  \end{proof}

\subsection{Proof of Theorem~\ref{thm:exact-charac-of-capa}}
\label{appen:thm6.5}
   
\begin{proof}
  \begin{enumerate}

    \item Follows from Theorem~\ref{thm:upper-bnd-Shannon-cap-symm} and the fact that for a perfect graph, $ \Theta(\Gs) = \alpha(\Gs) $ \cite{lovasz1979shannon}. 

      \item  Follows from Theorem~\ref{thm:posi-diff-greater} and the fact that for a perfect graph, $ \Theta(\Gs) = \alpha(\Gs) $.
  \end{enumerate}
\end{proof}

\section{Noisy Channel Results}
\label{appen:noisy}

Before we prove the results, we state a few definitions. For any strategy $g_n$ of the receiver, consider $ \pbest(g_n) $ defined as 
\begin{align}
   \pbest(g_n) = \argmin_{s_n \in \best(g_n)} |\Dscr(g_n,s_n)|.  \label{eq:worst-best-resp}
\end{align}
Thus, $ \pbest(g_n) $ is the set of best responses which give the least objective value to the receiver. Let $$\Zscr(y) = \supp(P_{Z|Y}(\cdot | y)),$$ where $y$ is an input to the channel. Note that since the output space of the channel is $\Xscr^n$, $\Zscr(y) \subseteq \Xscr^n $ for all $ y $.
\subsection{Proof of Theorem~\ref{thm:stac-eq-recov-ind-no}}
\label{appen:thm8.1}
\begin{proof}
Consider a Stackelberg equilibrium strategy $ g_n^* $ of the receiver.  It is known from Lemma~\ref{lem:dec-set-ind-set}, that the set of recovered sequences from the sender can be at most $ \alpha(\Gs^n) $ and hence \newline $ \min_{s_n \in \best(g_n^*)}|\Dscr(g_n^*,s_n)| \leq \alpha(\Gs^n) $. Further, at most $ \alpha(\Gc^n) $ sequences can be transmitted with zero error through the channel and hence for all $ s_n $, $ |\Dscr(g_n^*,s_n)| \leq \alpha(\Gc^n) $. Together, it follows that
    \begin{align}
  \min_{s_n \in \best(g_n^*)}|\Dscr(g_n^*,s_n)| \leq \min\{ \alpha(\Gs^n),\alpha(\Gc^n)\}. \non  
    \end{align}
    We now show that equality holds in the above relation.

 Let $ d = \min\{ \alpha(\Gs^n),\alpha(\Gc^n)\} $. Clearly, there exists an independent set $ I_{\mathsf{s}}^n $ in $ \Gs^n $ such that $ |I_{\mathsf{s}}^n| = d $. Similarly, there exists an independent set $ I_{\mathsf{c}}^n $ in $ \Gc^n $ such that $ | I_{\mathsf{c}}^n| = d$. 
  Let the sequences in the sets $I_{\mathsf{s}}^n$ and $I_{\mathsf{c}}^n$ be denoted as  $x^i$ and $y^i$ respectively, with  $i \in [d]$ and $x^i, y^i \in \Xscr^n$. With this convention, define the strategy $g_n$ as
  \begin{align}
g_n(z)  = \left\{
    \begin{array}{c l}
      x^i & \mbox{if} \; z \in \Zscr(y^i)  \\
      \Delta & \mbox{if} \; z \notin \bigcup_{i=1}^{d} \Zscr(y^i) 
    \end{array}
\right.. \label{eq:str-of-rec}
  \end{align}
  We show that the strategy $ g_n $ of the receiver ensures that all strategies $s_n$ in the set of best responses $ \best(g_n) $ are such that $\Zscr(s_n(x^i)) = \Zscr(y^i)$ for all $i \in [d]$. Fix an index $ i $, let $ s_n $ be any strategy for the sender and let     $s_n(x^i) = y^* \in \Xscr^n$. Notice that if $ \Zscr(y^*) \nsubseteq \bigcup_{j=1}^{d} \Zscr(y^j)$, then $g_n(z) = \Delta$ for some   $z \in \Zscr(y^*)$, $ z \notin \bigcup_{j=1}^{d} \Zscr(y^j)$. This gives that $\expec{\ut_n(g_n(Z),x^i)} = -\infty $.  Thus, $ y^* $ is such that $ \Zscr(y^*) \subseteq \bigcup_{j=1}^{d} \Zscr(y^j) $. 

  Writing $\Zscr(y^*) = \bigcup_{j=1}^{d}\Zscr(y^*) \cap \Zscr(y^j)$, it follows that
    \begin{align}      
      &    \expec{\ut_n(g_n(Z),x^i)}  \non \\
                &=\sum_{z \in \;\bigcup_{j=1}^{d}\Zscr(y^*) \cap \Zscr(y^j)} P_{Z|Y}(z|y^*)\ut_n(g_n(z),x^i) \non \\
          &= \sum_{j=1}^{d}\sum_{z \in \Zscr(y^*) \cap \Zscr(y^j)} P_{Z|Y}(z|y^*)\ut_n(x^j,x^i). \non
    \end{align}
The last equation follows from the definition of $g_n$. Since $I_{\mathsf{s}}^n$ is an independent set in $\Gs^n$,  $\ut_n(x^j,x^i) < \ut_n(x^i,x^i) = 0 $ for all $j \neq i$ and hence
    \begin{align}
&      \sum_{j=1}^{d}\sum_{z \in \Zscr(y^*) \cap \Zscr(y^j)} P_{Z|Y}(z|y^*)\ut_n(x^j,x^i) \leq 0 \non
    \end{align}
    with equality if and only if $ \Zscr(y^*) = \Zscr(y^i) $. Thus, the strategy $ y^* $ is such that  $ \Zscr(y^*) = \Zscr(y^i) $. Clearly, this holds for all $ s_n \in \best(g_n) $ and for all  $ i \in [d] $.

It easy to see that the utility of the sender and the receiver do not depend on the exact choice of $ y^* $ so long as $ \Zscr(y^*)=\Zscr(y^i) $. Hence, without loss of generality, we consider $ s_n $ to be such that $s_n(x^i) = y^i$ for all $ i \in [d] $. 
    Thus, when the sequence $x^i \in I_{\mathsf{s}}^n$ is observed by the sender, it encodes it as $y^i$. The channel generates an output $z$, which belongs to the support set $\Zscr(y^i)$. The receiver maps all such $z$ to $x^i$ thereby ensuring $\Pbb(\wi{X} = x^i | X = x^i) = 1$ for all $i \in [d]$. Thus, for all $x^i \in I_{\mathsf{s}}^n$ and  $s_n \in \best(g_n)$,  $x^i \in \Dscr(g_n,s_n)$. Hence $\Dscr(g_n,s_n) \supseteq I_{\mathsf{s}}^n$ for all $s_n \in \best(g_n)$. Since $|I_{\mathsf{s}}^n| = d$,  $|\Dscr(g_n,s_n)| \geq d$. Using $|\Dscr(g_n,s_n)| \leq d = \min\{\alpha(\Gs^n),\alpha(\Gc^n)\}$, it follows that for all Stackelberg equilibrium strategies $g_n^*$, $ \min_{s_n \in \best(g_n^*)}|\Dscr(g_n^*,s_n)| = \min \{\alpha(\Gs^n),\alpha(\Gc^n)\}$.
  \end{proof}
Thus, from the strategy defined in \eqref{eq:str-of-rec}, it can be observed that the receiver decodes meaningfully only for a subset of sequences. In particular, it chooses $d$ number of inputs, $y^i $ with $ i \; \in [d]$, which can be distinguished from each other and maps the respective support sets, $\Zscr(y^i)$, to $d$ distinct sequences $x^i$. For the rest of the outputs from the channel, the receiver declares an error $\Delta$. In response, the optimal strategy for the sender is (without loss of generality) such that it maps the sequences $x^i$ to the inputs $y^i$. Also, for all strategies $ s_n $ and for all $x \in \Xscr^n \setminus \{x^i\}_{ i \in [d] }$, $s_n(x) = y^*$, where $y^*$ is such that $\Zscr(y^*) \subseteq \bigcup_{j=1}^{d} \Zscr(y^j)$. This is because for any other input $y' \neq y^*$, if $\Zscr(y') \nsubseteq \bigcup_{j=1}^{d} \Zscr(y^j)$, then the channel output $z$ may lie outside $\bigcup_{j=1}^{d} \Zscr(y^j)$ for which $g_n(z) = \Delta$ and the corresponding utility is $-\infty$.

  \subsection{Proof of Theorem~\ref{thm:limit-D-min-of-capa}}
  \label{appen:thm8.2}
  \begin{proof}
  Let $\{g_n^*\}_{ n \geq 1}$ be a sequence of Stackelberg equilibrium strategies for the receiver. From Theorem~\ref{thm:stac-eq-recov-ind-no},
  \begin{align}
\min_{s_n \in \best(g_n^*)}|\Dscr(g_n^*,s_n)| = \min \{\alpha(\Gs^n),\alpha(\Gc^n)\}. \non 
  \end{align}
It can be shown that, for a discrete memoryless channel given by \eqref{eq:dmc-defn}, the graph $\Gc^n$ is same as the graph constructed by taking $n$-fold strong product of $\Gc$  \cite{shannon1956zero}, \ie, $     \Gc^n = \Gc^{\boxtimes n} $.
  Thus, $ \min_{s_n \in \best(g_n^*)}|\Dscr(g_n^*,s_n)| = \min \{\alpha(\Gs^n),\alpha(\Gc^{\boxtimes n})\}$  and hence
  \begin{align}
    R(g_n^*) &=   \min_{s_n \in \best(g_n^*)} |\Dscr(g_n^*,s_n)|^{1/n} \non \\
    &= \min\curly{ \alpha(\Gs^n)^{1/n},\alpha(\Gc^{\boxtimes n})^{1/n}}. \non
  \end{align}
The claim follows after taking the limit.
\end{proof}

\bibliography{ref.bib}

\end{document}